\tikzset{ball/.style={circle, draw, fill=black,inner sep=0pt, minimum width=4pt}}
\tikzset{nd/.style={inner sep=1pt}}
\tikzset{CRS/.style={circle, draw,inner sep=1pt, minimum width=8pt}}
\pgfplotsset{compat = newest}
\tikzset{>=Latex}
\tikzset{
  set arrow inside/.code={\pgfqkeys{/tikz/arrow inside}{#1}},
  set arrow inside={end/.initial=>, opt/.initial=},
  /pgf/decoration/Mark/.style={
    mark/.expanded=at position #1 with
    {
      \noexpand\arrow[\pgfkeysvalueof{/tikz/arrow inside/opt}]{\pgfkeysvalueof{/tikz/arrow inside/end}}
    }
  },
  arrow inside/.style 2 args={
    set arrow inside={#1},
    postaction={
      decorate,decoration={
        markings,Mark/.list={#2}
      }
    }
  },
}
\newtheorem{thm}{Theorem}[section]
\newtheorem{defn}[thm]{Definition}
\newtheorem{lem}[thm]{Lemma}
\newtheorem{corol}[thm]{Corollary}
\newtheorem{conj}[thm]{Conjecture}
\newtheorem{prop}[thm]{Proposition}
\newcommand{\real}{\mathbb{R}}
\newcommand{\exptu}{\mathbb{U}}
\newcommand{\dist}{\mathbf{d}}
\newcommand{\arc}[3][]{\begin{tikzcd} #2 \ar[r,-Latex,"#1"] \pgfmatrixnextcell #3 \end{tikzcd}}
\newcommand{\chain}[3][]{\begin{tikzcd} #2 \ar[r,dashed,-Latex,"#1"] \pgfmatrixnextcell #3 \end{tikzcd}}
\newcommand{\chaineq}[3][]{\begin{tikzcd} #2 \ar[r,dashed,dash,"#1"] \pgfmatrixnextcell #3 \end{tikzcd}}
\DeclareMathOperator{\cl}{cl}
\DeclareMathOperator{\intr}{int}
\DeclareMathOperator{\content}{content}
\title[Response Graphs and Chain Components]{The Replicator Dynamic, Chain Components and the Response Graph}
\begin{document}

\maketitle

\begin{abstract}
    In this paper we examine the relationship between the flow of the replicator dynamic, the continuum limit of Multiplicative Weights Update, and a game's \emph{response graph}. We settle an open problem establishing that under the replicator, \emph{sink chain components}---a topological notion of long-run outcome of a dynamical system---always exist and are approximated by the \emph{sink connected components} of the game's response graph. More specifically, each sink chain component contains a sink connected component of the response graph, as well as all mixed strategy profiles whose support consists of pure profiles in the same connected component, a set we call the \emph{content} of the connected component. As a corollary, all profiles are chain recurrent in games with strongly connected response graphs. In any two-player game sharing a response graph with a zero-sum game, the sink chain component is unique. In two-player zero-sum and potential games the sink chain components and sink connected components are in a one-to-one correspondence, and we conjecture that this holds in all games.
\end{abstract}

\section{Introduction}

When a collection of players simultaneously learn a game, what strategies do they eventually learn to play? This is the fundamental question of evolutionary game theory, and by extension is of critical importance in economics~\citep{sandholm2010population}, biology~\citep{smith1973logic}, and computer science~\citep{roughgarden2010algorithmic}. In particular, this question lies at the heart of \emph{multi-agent learning}~\citep{yang2020overview} which itself is at the core of recent breakthroughs in AI~\citep{silver2016mastering,silver2017mastering,silver2018general}. 

Unpacking this question in more depth, we find three key observations. The first: the question assumes a fixed learning or evolution algorithm, which in evolutionary game theory is called a \emph{dynamic}. When paired with the strategy space, we obtain a \emph{dynamical system}~\citep{strogatz2018nonlinear}. We expect dynamical systems theory to play a role in the solution. The second: as the word `eventually' suggests, we are interested in \emph{long-run} or asymptotic analysis of the dynamical system. The third: learning is a computational process, and so our analysis must be compatible with computational feasibility.

Despite our game-theoretic intuition, \emph{Nash equilibria}~\citep{nash1951non} are generally not the answer to our question. A series of results in evolutionary game theory~\citep{kleinberg2011beyond,sandholm2010population,papadimitriou2019game} have shown that standard game dynamics generally do not converge to (mixed) Nash equilibria~\citep{vlatakis2020no}\footnote{By contrast, many dynamics do converge to \emph{pure} Nash equilibria.}, even in zero-sum games~\citep{mertikopoulos2018cycles}. In fact, no choice of dynamic can converge to Nash equilibria in all games~\citep{hart2003uncoupled,benaim2012perturbations,milionis2022nash}. From a computational perspective, there are also problems: Nash equilibria are PPAD-complete to compute~\citep{daskalakis2009complexity}, and so our players cannot be expected to feasibly learn them.

If the Nash equilibrium is not the right answer, we must determine what \emph{is}. Using our key ideas, we set out three criteria for a solution concept, inspired by~\cite{fabrikant2008complexity}:
\begin{enumerate}
    \item (\textbf{Convergence}) Almost all points should converge to the solution concept, and should remain there once reached.
    \item (\textbf{Existence}) The solution concept should exist in all games.
    \item (\textbf{Computability}) The solution concept should be efficiently computable.
\end{enumerate}

Finally, to avoid trivial solutions\footnote{Taking the entire strategy space of the game as the solution concept satisfies these criteria, but it is not a particularly enlightening solution.}, we desire the \emph{minimal} set of strategy profiles satisfying these properties---\textbf{Existence}, in particular. A priori, these criteria---particularly the first!---seem very optimistic. We restrict our attention to the best-studied continuous-time dynamic, the \emph{replicator}~\citep{sandholm2010population}. The replicator emerged originally from population models in biology, and is the continuum limit of the \emph{Multiplicative Weights Update} algorithm~\citep{arora2012multiplicative}.
Yet even this well-studied and comparatively well-behaved dynamic's convergence property is notoriously difficult to understand, with general results only known for a few classes of games (such as zero-sum and potential games, as in~\citeauthor{mertikopoulos2018cycles}).

But hope remains, if we use the right mathematical tools. \cite{papadimitriou2019game,papadimitriou2018nash,papadimitriou2019game} recently proposed a new solution concept meeting the \textbf{Convergence} criterion, inspired by the concept of \emph{chain recurrence}, a dynamical systems concept underlying the Fundamental Theorem of Dynamical Systems~\citep{conley1978isolated}. To understand this concept, we first note that our \textbf{Convergence} criterion breaks into two parts, (1) converging to the solution concept, and (2) remaining there once reached. An example of a point satisfying the latter is a \emph{stationary point}, a point which remains in place under the dynamic. A more general notion of `staying in place' is a \emph{periodic} point, one which returns to itself repeatedly. A yet more general concept is a \emph{recurrent} point, one which returns arbitrarily close to itself infinitely often. A major historical challenge of dynamical systems was to find the appropriate generalisation of a recurrent point, such that all points `end up at' these stationary ones~\citep{alongi2007recurrence}. This was achieved by \citet{conley1978isolated}, who introduced the notion of \emph{chain recurrent} points.

Conley's insight was to use a `noisy' generalisation of an orbit of the system. He introduced the concept of an $(\epsilon,T)$-chain, which is an orbit which allows for a finite number of tiny `jumps' of size at most $\epsilon$. To ensure we cannot jump `too often', each jump must be separated by a time of at least $T>0$. If there exists an $(\epsilon,T)$-chain from $x$ to $y$ for \emph{any} $\epsilon > 0$ and $T>0$, we say there is a \emph{pseudo-orbit} from $x$ to $y$. A point $x$ is \emph{chain recurrent} if there is a pseudo-orbit from $x$ to itself. Chain recurrent points form connected components of the space, called \emph{chain components}. The Fundamental Theorem of Dynamical Systems~\citep{conley1978isolated} establishes that all points converge to chain components. When we view the dynamical system from the perspective of pseudo-orbits rather than `true' orbits, we obtain a partial ordering on chain components. The minimal elements in this order are called \emph{sink chain components}. Points in these components satisfy both parts of the \textbf{Convergence} criterion: they are chain recurrent, so they remain in the component once reached, and because they are \emph{sink} chain components other points end up there under pseudo-orbits.

The definition of chain recurrence is fundamentally \emph{computational}. Any computational device with arbitrarily large but finite precision cannot distinguish between a `true' orbit of the dynamical system and a pseudo-orbit. In fact, many common game dynamics exhibit chaotic behaviour, demonstrating the difficulty in tracking `true' orbits~\citep{cheung2019vortices,cheung2020chaos,andrade2021learning}. Consequently, any solution concept which is `computational' in this sense cannot distinguish between points in the same chain component, making sink chain components the minimal set satisfying our \textbf{Convergence} and \textbf{Computability} criteria.

Unfortunately, despite their computational inspiration, actually computing the sink chain components is not obviously feasible (failing \textbf{Computability}). There is a second problem: the order defined by pseudo-orbits is infinite in general, and so sink chain components need not exist (failing \textbf{Existence}). \cite{papadimitriou2019game} solve the second problem by assumption: they conjecture that \emph{in game dynamics, sink chain components always exist}, which we prove for the replicator in this paper. They address the first problem by suggesting a computable surrogate of sink chain components: the \emph{sink connected components} of the game's \emph{response graph}.

The response graph is a directed graph defined on the pure profiles of the game. There is an arc between profiles if they differ in the strategy of a single player, with the arcs directed toward the preferred profile for that player. The response graph can be thought of as underlying combinatorial structure of the game, capturing precisely the order in which each player prefers their strategies given fixed choices of strategy for each other player. The response graph is \emph{structurally stable}; changing the payoffs of the game in a small way generally does not affect it. Response graphs also underlie \emph{ordinal games}~\citep{mertens2004ordinality} and \emph{strategic games}~\citep{candogan2011flows}, and so are a unifying model of the game structure. In spite of their generality, response graphs store many structural features of a game~\citep{biggar2022twoplayer}. As a solution concept, sink connected components can be easily computed by traversing the response graph (\textbf{Computability}), and always exist because the connected components of a graph is a \emph{finite} partial order (\textbf{Existence}). Restricted to the edges of the graph, the replicator dynamic flows in the direction of higher payoff, suggesting that this discrete structure will be a useful analogue of the replicator flow over the whole game. If the sink connected components also capture the \textbf{Convergence} properties of the sink chain components, then they meet all of the desired criteria for a solution concept.

Sink connected components are a simple combinatorial object which depend only on each player's preference order. By contrast, the sink chain components are complex topological objects which depend on payoffs and the dynamic. Yet, remarkably, we show that the sink connected components capture key properties of the sink chain components of the replicator. This is the topic of this paper:
\textbf{the connection between sink chain components of games under the replicator dynamic and the sink connected components of the response graph.}

\subsection{Contributions}

We show, firstly, that \emph{sink chain components always exist under the replicator dynamic} (Theorem~\ref{scc theorem}), as conjectured by \cite{papadimitriou2019game}. The proof uses the structure of the response graph; specifically, each attractor contains an attracting (in the sense of paths) set of nodes in the response graph. In the same theorem, we prove that \emph{sink chain components of the replicator contain sink connected components}. A weaker result is proved in~\cite{omidshafiei2019alpha}, which establishes that \emph{asymptotically stable} sink chain components, if they exist, contain sink connected components of the replicator. We believe our proof is the first to establish the general result for the replicator (without the asymptotic stability requirement)\footnote{A more general result is stated in~\cite[Theorem~4.1]{papadimitriou2019game}, but no proof is given and the statement is not true in general.
}

To extend this result, we define the \emph{content} of a sink connected component as the set of all mixed strategy profiles where all pure profiles in their support are in the sink connected component. We then prove that \emph{the content of a connected component is always contained in the associated chain component} (Theorem~\ref{content containment}). The content of a strongly connected response graph is all mixed profiles; as a corollary, we obtain the surprising result that all profiles are chain recurrent in any game with a strongly connected response graph (such as Matching Pennies or Rock-Paper-Scissors). Augmenting this result, we prove that \emph{if a sink connected component is a subgame, then the associated sink chain component is precisely that subgame} (Corollary~\ref{sink subgames}).

We then analyse the influence of the zero-sum and potential properties on the chain components. Because we are interested in the sink connected components, and thus the graph structure, we study all games whose response graphs are isomorphic to a zero-sum or potential game respectively; these are called \emph{preference-zero-sum} and \emph{preference-potential} games~\citep{biggar2022twoplayer}. \emph{The sink chain components of preference-potential games are exactly the pure Nash equilibria} (Theorem~\ref{sinks in preference potential}). Preference-zero-sum games have a unique sink connected component, and consequently have exactly one sink chain component (Lemma~\ref{lem: unique}).
This result is surprisingly analogous to the well-known result that two-player zero-sum games have a convex set of Nash equilibria. However, preference-zero-sum games are a much more general set of games~\citep{biggar2022twoplayer}, defined entirely by their graph structure.

Using our results, we show in Section~\ref{sec: applications} that the content of sink connected components completely characterises, and thus allows us to compute, the replicator sink chain components of \emph{all} $2\times n$ strict games, all-but-one $3\times 3$ strict preference-zero-sum games, and all games where every sink connected component is a subgame, such as preference-potential and weakly acyclic games~\citep{young1993evolution}.

These results suggest that the response graph has a much more significant impact on the outcome of the game than we might expect. We conjecture that there is always a one-to-one correspondence between sink chain components and sink connected components, suggesting that the long-run behaviour of the replicator dynamic is fundamentally governed by the graph structure of the game.

Wherever a result has no reference, a proof can be found in the Appendix.

\section{Related Work}

Though the idea of sink chain components as a tool to analyse games has appeared historically in the game theory literature~\citep{akin1984evolutionary}, the movement towards dynamic solution concepts has accelerated in the modern algorithmic game theory community~\citep{sandholm2010population}. Our solution concept criteria are derived from those in~\cite{fabrikant2008complexity}. \cite{kleinberg2011beyond} demonstrate that the replicator generically does not converge to Nash, and give an example where the game converges to a cycle with strictly higher social welfare than the unique Nash equilibrium. \cite{papadimitriou2016nash,papadimitriou2018nash} present the argument for using chain recurrence to analyse games, and study the replicator dynamic on zero-sum and weighted potential games as a demonstration. This was generalised to Follow-The-Regularised-Leader (FTRL) dynamics and network zero-sum games in~\cite{mertikopoulos2018cycles}. Our work most closely follows \cite{papadimitriou2019game}, where the authors argue for sink chain components and suggest a connection with sink connected components of the response graph. The authors also set out the importance of further understanding of sink chain components. Since then, \cite{vlatakis2020no} demonstrated that mixed Nash are never attracting under FTRL dynamics and argued for greater understanding the long-run behaviour of these dynamics. 

Following~\cite{papadimitriou2019game}, \cite{omidshafiei2019alpha} use the replicator, sink chain components and sink connected components to develop a method, called \emph{$\alpha$-rank}, to evaluate the strength of algorithms in multi-agent learning settings, such as AlphaGo~\citep{silver2016mastering}. The validity of $\alpha$-rank as a ranking method for algorithms is \emph{predicated on the premise} that sink connected components are a good surrogate for sink chain components of the replicator. In \cite{omidshafiei2019alpha}, this premise rests on a proof that \emph{asymptotically stable} sink chain components of the replicator are finite in number and contain sink connected components. It is not established that any sink chain components (asymptotically stable or not) exist. In this paper, we prove that sink chain components of the replicator always exist and always contain sink connected components (dropping the requirement that they be asymptotically stable)
. We also conjecture there is a one-to-one correspondence between sink chain and connected components, which if true would greatly strengthen the motivation for $\alpha$-rank.

The response graph is a concept of increasing interest, particularly in algorithmic game theory~\citep{fabrikant2008complexity,goemans2005sink,kleinberg2011beyond}. A labelled form of the response graph is a key component of the decomposition results of~\cite{candogan2011flows}. \cite{biggar2022twoplayer} explicitly the study the response graph and its sink connected components, and examine the influence of the zero-sum and potential properties on the graph. We use these results extensively in Section~\ref{sec: applications}. More recently, the structure of the response graph is used to describe the `landscape' of games, for the purposing of analysing multi-agent learning~\citep{omidshafiei2020navigating}.

\section{Preliminaries} \label{sec: preliminaries}

A game is a triple consisting of $N$ players, strategy sets $S_1,S_2,\dots,S_N$ and a utility function $u : \prod_{i=1}^{N}S_i \to \real^N$. We assume each strategy set $S_i$ is finite. An element of $\prod_{i=1}^{N}S_i$ (an assignment of strategies to players) we call a \emph{profile}, and we denote the set of profiles by $Z$. We use the notation $\bar p_{-i}$ to denote an assignment of strategies to all players other than $i$, which we call an \emph{antiprofile}, and we denote the set of all $i$-antiprofiles by $\bar Z_{-i}$. If we insert a strategy $s\in S_i$ in the $i$th index of $\bar p_{-i}$ we obtain a profile, and we denote this operation by $(s ; \bar p_{-i})$.
A \emph{subgame} of a game $([N],\{S_1,\dots,S_n\},u)$ is a game $([N],\{T_1,\dots,T_n\},u')$ where for each $i$, $T_i\subseteq S_i$, and $u'$ is $u$ restricted to $\prod_{i=1}^N T_i$.

Two profiles are \emph{$i$-comparable} if they differ only in the strategy of player $i$; they are \emph{comparable} if they are $i$-comparable for some player $i$. If two profiles are comparable, then there is exactly one $i$ such that they are $i$-comparable. We say a game is \emph{strict} if the payoffs to player $i$ in two $i$-comparable profiles are never equal.
The \emph{response graph} of a game $u:Z\to\real^N$ is the graph $\mathcal{G}_u = (Z, A)$ where there is an arc $\arc{p}{q}\in A$ between profiles $p$ and $q$ if and only if they are $i$-comparable and $u_i(p) \leq u_i(q)$. A subgraph of the response graph is \emph{attracting} if there are no paths out of it. The sink connected components are minimal attracting subgraphs.

A \emph{mixed strategy} is a distribution over a player's pure strategies, and a \emph{mixed profile} is an assignment of a mixed strategy to each player. We sometimes refer to a profile as a \emph{pure profile} to distinguish it from a mixed profile. For a mixed profile $x$, we write $x^i$ for the distribution over player $i$'s strategies, and $x^i_s$ for the $s$-entry of player $i$'s distribution, where $s\in S_i$. The set of mixed profiles on a game is given by $\prod_{i=1}^N \Delta_{|S_i|}$ where $\Delta_{|S_i|}$ are the simplices in $\real^{|S_i|}$. We denote $\prod_{i=1}^N \Delta_{|S_i|}$ simply by $X$, and call $X$ the \emph{strategy space} of the game. The utility function $u$ of a game extends naturally to mixed profiles. The \emph{expected utility function} of $u$ is $\exptu: X \to \real^N$, where \[
\exptu(x) = \sum_{q\in Z}u(q) \prod_{i=1}^N x^i_{q_i}
\]

\subsection{Dynamical Systems} \label{sec: dynamical systems}

We study the \emph{replicator dynamic}, which is a continuous-time dynamical system~\citep{sandholm2010population,hofbauer2003evolutionary} defined by the following ordinary differential equation, where $N$ is the number of players, $\exptu$ is the expected utility function, and $x$ is a mixed profile.
\[
\dot x_s^p = x_s^p\left(\exptu_p(s; \bar x_{-p}) - \sum_{t\in S_p} x_t^p \exptu_p(t; \bar x_{-p}) \right)
\]
The solutions to this equation define a \emph{flow}~\citep{sandholm2010population,vlatakis2020no} on the strategy space of a game, which is a function $\phi:X\times \real\to X$ which is a continuous group action of the reals on $X$. We call this the \emph{replicator flow}.
The forward orbit of the flow from a given point is called a \emph{trajectory} of the system. Flows are \emph{invertible}, that is, $\phi^{-1}$ is also a flow, called the \emph{time-reversed} flow. We make use of two special properties of the replicator dynamic. The first is that the replicator is subgame-independent: the support of a point is invariant along an orbit, and the trajectory is only defined by the payoffs in that subgame (Theorem~\ref{subgame invariance}). The second property of the replicator is \emph{volume preservation}: after a differentiable change of variables, the replicator preserves the volume of all sets on the interior of a game~\citep{akin1984evolutionary,hofbauer1996evolutionary,eshel1983coevolutionary,selten1988note,sandholm2010population,vlatakis2020no}. Consequently, no attractor or repellor (Definition~\ref{def: attractor}) can exist in the interior of the state space (Theorem~\ref{no interior attractors}).
\begin{thm}[Subgame-independence of the replicator] \label{subgame invariance}
Let $X$ be the strategy space of a game $u$, and $Y$ be the strategy space of a subgame $u'$ of $u$. The flow $\phi_u|_Y$ of the replicator on $u$ restricted to $Y$ is identical to $\phi_{u'}$, the replicator flow on $u'$.
\end{thm}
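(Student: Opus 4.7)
The plan is to exploit the multiplicative structure of the replicator equation. Specifically, every derivative $\dot x^p_s$ has $x^p_s$ as a prefactor, so the zero set of any coordinate is forward invariant; combined with reversibility of the flow it is invariant for all time. I would begin by making the embedding of $Y$ into $X$ explicit: identify the strategy space $Y$ of the subgame $u'$ with the subset $\iota(Y) \subseteq X$ consisting of all mixed profiles $x$ such that $x^p_s = 0$ whenever $s \in S_p \setminus T_p$. This is a product of faces of the simplices $\Delta_{|S_p|}$, and since $u'$ is literally the restriction of $u$, the identification $\iota$ is a diffeomorphism from $Y$ onto a smooth submanifold of $X$.

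The first step is invariance. For $s \notin T_p$ and any $x \in \iota(Y)$, we have $x^p_s = 0$, hence $\dot x^p_s = 0$. By uniqueness of solutions to the replicator ODE (smooth vector field), the coordinate $x^p_s$ remains zero along the orbit of $x$ under $\phi_u$, so $\phi_u(x,t) \in \iota(Y)$ for all $t \in \real$. Hence $\iota(Y)$ is invariant and the restricted flow $\phi_u|_{\iota(Y)}$ is well defined.

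The second step is to match the vector fields on $\iota(Y)$. For $x \in \iota(Y)$ and $s \in T_p$, compute the expected utility
\[
\exptu_p(s; \bar x_{-p}) \;=\; \sum_{q \in Z} u_p(s; \bar q_{-p}) \prod_{i \neq p} x^i_{q_i}.
\]
Every term with some $q_i \notin T_i$ vanishes because $x^i_{q_i} = 0$, so the sum reduces to a sum over antiprofiles $\bar q_{-p} \in \prod_{i\neq p} T_i$. On these profiles $u$ and $u'$ agree by the definition of the subgame, so this sum is exactly $\exptu'_p(s; \bar x_{-p})$ computed in $u'$. Substituting into the replicator equation yields identical expressions for $\dot x^p_s$ (for $s \in T_p$) under both dynamics, and the remaining coordinates $s \notin T_p$ have derivative $0$ in both the $u$-flow (by invariance) and the $u'$-flow (they do not exist as variables there).

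The main obstacle I anticipate is only notational: formally setting up the identification $\iota$ and confirming that the restriction of the ambient vector field to the tangent space of $\iota(Y)$ really is the pushforward of the subgame vector field. Once the embedding is in place, no analytical work remains: invariance comes from the $x^p_s$ prefactor, agreement of the expected payoffs is automatic because terms outside $\prod_i T_i$ are killed by zero factors, and uniqueness of ODE solutions on the common state space $Y$ then forces $\phi_u|_Y = \phi_{u'}$.
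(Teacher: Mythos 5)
Your proposal is correct and follows essentially the same route as the paper: invariance of the face from the $x^p_s$ prefactor, and agreement of the vector fields because all terms involving strategies or antiprofiles outside the subgame are killed by zero factors. Your explicit appeal to uniqueness of ODE solutions to conclude that the zero coordinates stay zero is a slightly more careful justification of a step the paper leaves implicit, but it is not a different argument.
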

The fact that replicator trajectories have constant support is well-known~\citep{sandholm2010population}. The fact that the flow is defined by the payoffs for strategies in the support follows easily from the fact that all other terms in the differential equation vanish. This result allows us to analyse the flow of the replicator on a subgame of a game using induction on subgames, as we do in the proof of Theorem~\ref{content containment}. Another important dynamical systems concept are \emph{attractors}.

\begin{defn}[\citet{sandholm2010population}] \label{def: attractor}
Let $A$ be a compact, non-empty invariant set under a flow $\phi$ on a compact space $X$. If there is a neighbourhood $U$ of $A$ such that \[
\lim_{t\to\infty} \sup_{x\in U} \inf_{y \in A} \dist(\phi(x,t), y) = 0
\]
where $\dist$ is a metric, then we call $A$ an \emph{attractor}. An attractor of the time-reversed flow $\phi^{-1}$ we call a \emph{repellor}.
\end{defn}
There are many equivalent ways of defining an attractor~\citep{sandholm2010population}. In particular, a compact set $A$ being an attractor is equivalent to requiring that there is an open forward-invariant set $B$ with $\phi(\cl(B),t) \subset B$ for all times $t\geq T > 0$, and $A = \bigcap_{t\geq 0} \phi(B,t)$. Such a $B$ is called a \emph{trapping region} for $A$. Each attractor has a \emph{dual repellor}, defined by trapping regions.

\begin{lem}[\cite{sandholm2010population}]
Let $A$ be an attractor, with $B$ a trapping region for $A$. Then $A^* := \bigcap_{t\leq 0} \phi(X\setminus B,t)$ is a repellor, which we call the \emph{dual repellor of $A$}.
\end{lem}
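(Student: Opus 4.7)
The plan is to exhibit an open trapping region for $A^*$ under the time-reversed flow $\phi^{-1}$ and verify that the intersection formula in the lemma statement recovers the resulting dual attractor. I take $B' := X\setminus\cl(B)$, which is open, and first derive the dual trapping property $\phi^{-1}(\cl(B'),t)\subset B'$ for $t\geq T$. Since $\phi(\cdot,t)$ is a homeomorphism of $X$, complementing $\phi(\cl(B),t)\subset B$ gives $\phi^{-1}(X\setminus B,t)\subset X\setminus\cl(B) = B'$; and because $B'\cap B = \emptyset$, the closure $\cl(B')$ sits inside $X\setminus B$, so $\phi^{-1}(\cl(B'),t)\subset\phi^{-1}(X\setminus B,t)\subset B'$.

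Next I would verify $A^* = \bigcap_{t\geq 0}\phi^{-1}(B',t)$. Unpacking definitions, $x\in A^*$ is equivalent to $\phi(x,s)\in X\setminus B$ for every $s\geq 0$, whereas membership in $\bigcap_{t\geq 0}\phi^{-1}(B',t)$ is equivalent to $\phi(x,s)\in B'$ for every $s\geq 0$. One direction is immediate from $B'\subset X\setminus B$. For the other, suppose $x\in A^*$ and, for contradiction, that $\phi(x,s_0)\in\cl(B)$ for some $s_0\geq 0$. The trapping property gives $\phi(x,s_0+T) = \phi(\phi(x,s_0),T)\in\phi(\cl(B),T)\subset B$, but $s_0+T\geq 0$, contradicting $x\in A^*$. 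Hence $\phi(x,s)\in B'$ for all $s\geq 0$, as required.

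With these two facts in hand, the result follows from the alternative characterisation of attractors stated just after Definition~\ref{def: attractor}, applied to $\phi^{-1}$: $B'$ is open, $\phi^{-1}(\cl(B'),t)\subset B'$ for all $t\geq T$, and $A^*$ equals the corresponding intersection $\bigcap_{t\geq 0}\phi^{-1}(B',t)$, so $A^*$ is an attractor of $\phi^{-1}$, i.e., a repellor of $\phi$. Compactness of $A^*$ is immediate as an intersection of closed subsets of the compact space $X$; non-emptiness follows from the nested-in-$t$ structure of the family $\{\phi(X\setminus B,t)\}_{t\leq 0}$ together with the finite intersection property, assuming the non-degenerate case $B\neq X$ (otherwise $A = X$ and there is nothing to prove). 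The one genuine technical obstacle is the definitional mismatch between $X\setminus B$ in the lemma's formula and $X\setminus\cl(B)$ in the trapping region, and this is exactly what the short contradiction argument in the previous paragraph resolves.
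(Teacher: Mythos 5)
The paper does not actually prove this lemma---it is imported from \cite{sandholm2010population}---so there is no internal proof to compare against; your argument is the standard duality proof and is essentially correct. The complementation step giving $\phi^{-1}(X\setminus B,t)\subset B'$ for $t\geq T$, the set identity $A^*=\bigcap_{t\geq 0}\phi^{-1}(B',t)$, and in particular the short contradiction resolving the mismatch between $X\setminus B$ and $X\setminus\cl(B)$ are all sound. One hypothesis of the characterisation you invoke is left unverified: the equivalence stated after Definition~\ref{def: attractor} requires the trapping region to be \emph{forward-invariant}, so you need $\phi^{-1}(B',t)\subset B'$ for \emph{all} $t\geq 0$, not only for $t\geq T$; your Step 1 does not cover $0\leq t<T$, where a priori a boundary point of $B$ could exit $\cl(B)$ under the forward flow. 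The missing line is easy: since each time-$t$ map is a homeomorphism and $B$ is forward-invariant, $\phi(\cl(B),t)=\cl(\phi(B,t))\subseteq\cl(B)$ for every $t\geq 0$, and complementing under the bijection $\phi(\cdot,t)$ gives $\phi^{-1}(B',t)\subset B'$ for all $t\geq 0$. With that added, together with your (correct) treatment of compactness, the nested family argument for non-emptiness, and the degenerate case $B=X$, the proof is complete.
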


Attractors and repellors are dual in the sense that $A^*$ is an attractor of the time-reversed flow $\phi^{-1}$, and in this flow $A$ is its dual repellor. On a compact space, the dual repellor is non-empty.
Now we can formally establish that the replicator has no attractors in its interior. 

\begin{thm}[The replicator has no interior attractors, \cite{eshel1983coevolutionary}] \label{no interior attractors}
Let $G$ be a game, with $X$ the strategy space. Under the replicator, there are no attractors in the interior of $X$.
\end{thm}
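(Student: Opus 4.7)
The plan is to derive a contradiction from the existence of an interior attractor, exploiting the volume-preservation property of the replicator (after a change of variables) that was mentioned in the preliminaries.

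First, suppose for contradiction that $A \subset \intr X$ is an attractor. Since $A$ is compact and $\intr X$ is open, there is an open neighbourhood of $A$ whose closure is contained in $\intr X$. Using the definition of an attractor (or equivalently the trapping region characterisation noted after Definition~\ref{def: attractor}), I would choose an open trapping region $B$ for $A$ such that $\cl B \subset \intr X$, $\partial B$ has zero Lebesgue measure (by taking $B$ smooth, e.g.\ a sublevel set of a smooth Lyapunov-like function), and $\phi(\cl B, T) \subset B$ for some $T > 0$. Since $\phi(\cl B, T)$ is compact while $B$ is open (and clearly $\phi(\cl B, T) \neq B$ because $B$ is not compact in $\intr X$), the inclusion $\phi(\cl B, T) \subsetneq B$ is strict, so the open set $B \setminus \phi(\cl B, T)$ has strictly positive Lebesgue measure.

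Next I would invoke the volume-preservation property cited from \citet{eshel1983coevolutionary,hofbauer1996evolutionary,sandholm2010population}: there is a diffeomorphism $\Phi : \intr X \to \Omega \subset \real^d$ under which the replicator flow $\phi$ becomes a flow $\psi$ on $\Omega$ that preserves Lebesgue volume. Since $\Phi$ is a diffeomorphism, it sends measure-zero sets to measure-zero sets and sets of positive measure to sets of positive measure. In particular, $\Phi(\cl B)$ and $\Phi(B)$ differ by a set of measure zero, and $\Phi(B) \setminus \Phi(\phi(\cl B, T))$ still has positive measure.

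Combining these facts yields the contradiction:
\[
\mathrm{vol}(\Phi(B)) \;=\; \mathrm{vol}(\Phi(\cl B)) \;=\; \mathrm{vol}(\psi(\Phi(\cl B), T)) \;=\; \mathrm{vol}(\Phi(\phi(\cl B, T))) \;<\; \mathrm{vol}(\Phi(B)),
\]
where the middle equality uses volume-preservation of $\psi$ and the final strict inequality uses that $\Phi(\phi(\cl B, T)) \subset \Phi(B)$ with a complement of positive measure. The main obstacle is making sure the argument stays entirely inside $\intr X$, since the change of variables $\Phi$ is only a diffeomorphism on the open interior; this is precisely why the theorem is stated for interior attractors and does not preclude attractors touching the boundary of $X$. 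Aside from that, all the work is in citing the Akin--Eshel transformation; the measure-theoretic step is routine once a smooth trapping region is in hand.
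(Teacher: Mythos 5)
Your proposal is correct and follows exactly the argument the paper relies on: the paper does not prove this theorem itself but cites it, noting in Section~\ref{sec: dynamical systems} that volume preservation under a differentiable change of variables on the interior precludes interior attractors, which is precisely the trapping-region/Liouville contradiction you flesh out. The only cosmetic remark is that the measure-zero-boundary condition on $B$ is unnecessary, since $\mathrm{vol}(\Phi(\cl B)) \geq \mathrm{vol}(\Phi(B)) > \mathrm{vol}(\Phi(\phi(\cl B, T))) = \mathrm{vol}(\Phi(\cl B))$ already yields the contradiction.
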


This does not mean that there are no attractors at all---only that each must contain at least some points on the boundary of $X$. This theorem also applies identically to repellors. Inductive repetition of this theorem combined with the flow along edges gives us:
\begin{lem}[Attractors contain pure profiles, \cite{vlatakis2020no}]\label{attractors contain nodes}
Every attractor of the replicator flow $\phi$ contains an attracting set of nodes in the response graph.
\end{lem}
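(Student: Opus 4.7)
The plan is induction on $\dim X$, using Theorem~\ref{no interior attractors} to locate boundary points of $A$ and Theorem~\ref{subgame invariance} to recurse into a proper subgame, then ``propagating'' response-graph arcs from any pure profile in $A$ via the one-dimensional replicator on each edge.

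\emph{Step 1 (a pure profile lies in $A$).} I induct on $\dim X$. The base case $\dim X = 0$ is trivial: $X$ is a single pure profile and $A$ must equal it. For the inductive step, Theorem~\ref{no interior attractors} prevents $A$ from lying wholly in $\intr X$, so $A$ meets some boundary face $F$ of $X$ corresponding to a proper subgame. I claim $A \cap F$ is an attractor of $\phi|_F$: it is non-empty, compact, and invariant (since $F$ is invariant by Theorem~\ref{subgame invariance}); and if $U$ is an attracting neighborhood of $A$, then a short compactness argument shows that trajectories in $U \cap F$---which remain in the closed set $F$ while approaching $A$---in fact approach $A \cap F$. Theorem~\ref{subgame invariance} identifies $\phi|_F$ with the replicator on a strictly lower-dimensional subgame, so by induction $A \cap F$ contains a pure profile of that subgame, which is also a pure profile of the ambient game and lies in $A$.

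\emph{Step 2 (closure under response-graph arcs).} Suppose $p \in A$ is a pure profile and $p \to q$ is an arc of the response graph, with $p$ and $q$ differing in player $i$'s strategy. Let $E \subset X$ be the 1D subgame joining $p$ and $q$; by Theorem~\ref{subgame invariance}, $\phi|_E$ is the replicator on the 2-strategy game for player $i$ with the other players' strategies fixed. If $u_i(p) < u_i(q)$, every interior point of $E$ flows monotonically to $q$; picking any such point inside an attracting neighborhood of $A$, its trajectory stays in $E$, tends to $q$, and also tends to $A$, so $q \in A$ since $A$ is closed. If $u_i(p) = u_i(q)$, every point of $E$ is a fixed point; fixed points inside an attracting neighborhood of $A$ must themselves lie in $A$ (they never move, yet their distance to $A$ must tend to zero), so $A \cap E$ is open in $E$ about each of its points. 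Combined with being closed and non-empty, connectedness of $E$ forces $A \cap E = E$, and again $q \in A$.

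\emph{Conclusion and main obstacle.} Combining the two steps, the set of pure profiles in $A$ is non-empty and closed under response-graph out-arcs, hence is an attracting set of nodes. The main delicacy is the tie case in Step 2: there is no directed replicator flow from $p$ to $q$, so $q \in A$ must be extracted purely topologically via the clopen argument on $E$. Step 1 is comparatively routine once one verifies that an attractor of the ambient flow restricts to an attractor of the flow on any invariant face it meets, for which subgame invariance is essential.
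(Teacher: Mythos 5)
Your argument is correct and matches the approach the paper itself indicates for this cited result (``inductive repetition'' of Theorem~\ref{no interior attractors} over subgame faces to find a pure profile in the attractor, plus the flow along edges to close under out-arcs), with your clopen argument on a tied edge being a clean way to handle the weak-inequality case. The only gloss is the ``short compactness argument'' in Step~1, which is most easily made rigorous by intersecting a trapping region $B$ for $A$ with the invariant face $F$ and checking that $B\cap F$ is a trapping region for $A\cap F$ in $\phi|_F$.
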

Lemma~\ref{attractors contain nodes} establishes a connection between attractors and the response graph. In particular, because sink connected components are minimal attracting sets, it follows that every attractor of the replicator contains a sink connected component.

\subsection{Chains}

\begin{defn}[\cite{alongi2007recurrence}]
Let $\phi$ be the flow of a dynamical system on a compact metric space $X$. Let $x$ and $y$ be points in $X$. There is an \emph{$(\epsilon,T)$-chain} from $x$ to $y$ if there is a finite sequence of points $x_1,x_2,\dots,x_n$ with $x=x_1$ and $y=x_n$, and times $t_1,\dots,t_n \in [T,\infty)$ such that $\dist(\phi(x_i, t_i), x_{i+1}) < \epsilon$.
\end{defn}

\begin{defn}
If there is an $(\epsilon,T)$-chain from $x$ to $y$ for \emph{all} $\epsilon > 0$ and $T>0$ we say there is a \emph{pseudo-orbit} from $x$ to $y$, and write $\chain{x}{y}$. We say $x$ and $y$ are \emph{chain equivalent} if $\chain{x}{y}$ and $\chain{y}{x}$ and write $\chaineq{x}{y}$. If $\chaineq{x}{x}$ we say $x$ is \emph{chain recurrent}.
\end{defn}
The relation defined by pseudo-orbits is transitive, though not reflexive, as in general not all points are chain recurrent. However, restricting our attention to the chain recurrent points gives a preorder. Every preorder has an associated partial order given by grouping points into equivalence classes. Each equivalence class under chain equivalence is called a \emph{chain component}, and they are connected components of the topological space~\citep{alongi2007recurrence}. Given a chain recurrent point $x$, we denote its chain component by $[x]$. 
The chain recurrent points can be characterised by attractors.
\begin{thm}[\cite{alongi2007recurrence}] \label{recurrence characterisation}
A point $x$ is chain recurrent if and only if, for every attractor $A$, either $x\in A$ or $x\in A^*$.
\end{thm}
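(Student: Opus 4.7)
The plan is to prove each direction by contraposition, using the duality between attractors and repellors and the fact that on a compact metric space the basin of attraction of $A$ is exactly $X \setminus A^{*}$; equivalently, every point not in $A^{*}$ is attracted to $A$.

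For the forward direction, I would fix $x$ chain recurrent and an attractor $A$ with $x \notin A^{*}$; the goal reduces to showing $x \in A$. Choose a trapping region $B$ so that $\phi(\cl(B), t) \subset B$ for all $t \geq T_0 > 0$ and $A = \bigcap_{t \geq 0} \phi(B, t)$. Since $x \notin A^{*}$, the forward orbit of $x$ eventually enters $B$. The key geometric step uses uniform continuity of $\phi$ together with compactness of $\cl(B)$ to produce an $\epsilon_1 > 0$ such that the $\epsilon_1$-neighbourhood of $\phi(\cl(B), T_0)$ still lies inside $B$. Consequently any $(\epsilon_1, T_0)$-chain that ever enters $B$ remains trapped, since each time-$T_0$ segment lands in $\phi(\cl(B), T_0)$ and the subsequent $\epsilon_1$-jump stays in $B$. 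Chain recurrence of $x$ then supplies, for every $\epsilon \leq \epsilon_1$, an $(\epsilon, T_0)$-chain from $x$ back to $x$; this chain must enter $B$ once it follows the true flow long enough, and is thereafter trapped, so its endpoint — which is $x$ — lies in $\cl(B)$. Repeating this argument inside the nested trapping regions $\phi(B, nT_0)$ of $A$ forces $x \in \bigcap_{n} \phi(B, nT_0) = A$.

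For the converse, I would prove the contrapositive: if $x$ is not chain recurrent, construct an attractor $A$ with $x \notin A$ and $x \notin A^{*}$. By hypothesis there exist $\epsilon_0, T_0 > 0$ with no $(\epsilon_0, T_0)$-chain from $x$ to $x$. Let $W$ be the open set consisting of all points at distance less than $\epsilon_0/2$ from some $\phi(z, t)$, where $t \geq T_0$ and $z$ is reachable from $x$ by an $(\epsilon_0/2, T_0)$-chain. Then $W$ contains the forward orbit of $x$ after time $T_0$, but omits a neighbourhood of $x$ itself (otherwise one could concatenate to close up an $(\epsilon_0, T_0)$-chain at $x$, contradicting the choice of $\epsilon_0, T_0$). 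A short uniform-continuity argument on $\cl(W) \times [0, 2T_0]$ absorbs the $\epsilon_0/2$ fuzz and verifies $\phi(\cl(W), t) \subset W$ for $t \geq T_0$, so $W$ is a trapping region for the attractor $A = \bigcap_{t \geq 0} \phi(W, t)$. The inclusions $A \subset W$ and $x \notin W$ give $x \notin A$, while the forward orbit of $x$ entering $W$ gives $x \notin A^{*}$, as required.

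The main obstacle in both directions is matching the pseudo-orbit jump scale $\epsilon$ against the ``inward margin'' of the trapping region; this is delicate but routine once one has compactness and uniform continuity of $\phi$ on bounded time windows. The most subtle point is the converse, where one must check that the candidate $W$ is a genuine trapping region rather than merely a forward-invariant open set — this is precisely why the construction thickens the reachable set by $\epsilon_0/2$ and is built from $(\epsilon_0/2, T_0)$-chains rather than $(\epsilon_0, T_0)$-chains, leaving a spare $\epsilon_0/2$ of slack to verify $\phi(\cl(W), t) \subset W$.
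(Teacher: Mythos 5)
The paper does not prove this statement---it is imported directly from \cite{alongi2007recurrence}---so there is no internal proof to compare against; your argument is the standard Conley-style proof via trapping regions. The converse direction is essentially sound: the thickened reachable set $W$ built from $(\epsilon_0/2,T_0)$-chains does satisfy $\phi(\cl(W),t)\subseteq W$ for $t\geq T_0$, because every point of $\cl(W)$ is itself reachable from $x$ by an $(\epsilon_0/2,T_0)$-chain (the spare $\epsilon_0/2$ absorbs both the thickening and the closure), $x$ is excluded from $W$ exactly as you say, and the forward orbit of $x$ enters $W$, giving $x\notin A\cup A^*$. Two routine points to tidy there: you must explicitly allow $z=x$ in the definition of $W$ so that $W$ really contains the forward orbit of $x$, and $W$ as defined is only guaranteed forward-invariant for $t\geq T_0$, so you should either pass to $\bigcup_{t\geq T_0}\phi(W,t)$ or invoke the characterisation of attractors that only requires $\phi(\cl(B),T)\subset B$ for a single $T>0$.

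The one genuine gap is in the forward direction, at the sentence ``this chain must enter $B$ once it follows the true flow long enough.'' An $(\epsilon,T_0)$-chain from $x$ to $x$ is only required to follow the true flow for time at least $T_0$ before each jump; if $x$ lies outside $B$ and the orbit of $x$ needs time $t_0\gg T_0$ to reach $B$, the chain may jump off the orbit after time exactly $T_0$ and thereafter wander without ever entering $B$, so nothing traps it and the argument stalls. The repair uses the full strength of chain recurrence, which supplies $(\epsilon,T)$-chains for \emph{every} $T$, not just $T_0$: since $x\notin A^*$ there is a $t_0$ with $\phi(x,t)\in\phi(\cl(B),T_0)$ for all $t\geq t_0$, and taking $T\geq\max(T_0,t_0)$ forces the very first segment $\phi(x,t_1)$ of the chain to land in $\phi(\cl(B),T_0)$, after which your $\epsilon_1$-margin argument traps every subsequent point and places the endpoint $x$ in $B$. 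With that modification, applied to each nested trapping region $\phi(B,nT_0)$ with its own margin $\epsilon_n$ and entry time, the forward direction goes through.
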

We write $\mathcal{A}_Y$ for the set of attractors of $\phi$ wholly containing a set $Y$ of points. If $Y$ is a singleton $\{y\}$ then we simplify notation by writing $\mathcal{A}_y$ rather than $\mathcal{A}_{\{y\}}$. Likewise, we write $\mathcal{R}_Y$ for the set of repellors containing $Y$. The following lemma is critical.
\begin{lem} \label{chains in attractors}
If $\chain{x}{y}$ then $\mathcal{A}_x\subseteq \mathcal{A}_y$ and $\mathcal{R}_y\subseteq \mathcal{R}_x$. Conversely, if $x$ is chain recurrent, then $\mathcal{A}_x \subseteq \mathcal{A}_y$ implies $\chain{x}{y}$, and $\mathcal{R}_x\subseteq \mathcal{R}_y$ implies $\chain{y}{x}$. Consequently, if $x$ is chain recurrent, $\chaineq{x}{y}$ if and only if $\mathcal{A}_x = \mathcal{A}_y$ and $\mathcal{R}_x = \mathcal{R}_y$. If $y$ is also chain recurrent then one of these equalities is sufficient.
\end{lem}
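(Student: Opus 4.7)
I will prove the four assertions in sequence: (i) the forward direction $\chain{x}{y} \Rightarrow \mathcal{A}_x \subseteq \mathcal{A}_y$ via trapping regions, (ii) the repellor analogue via time-reversal, (iii) the converse under chain recurrence using a Conley-style construction, and (iv) the combination for chain equivalence. For (i), fix $A \in \mathcal{A}_x$ and an open trapping region $B$ of $A$, so $B$ is forward-invariant with $\phi(\cl B, T) \subset B$ for some $T > 0$. The key observation is that there is $\epsilon > 0$ such that every $(\epsilon, T)$-chain starting in $\cl B$ remains in $\cl B$: forward invariance combined with the trapping property gives $\phi(\cl B, t) \subset B$ for all $t \geq T$, and since $\cl B$ is compact with $\phi(\cl B, t) \to A$, the uniform gap $\delta := \inf_{t \geq T} \dist(\phi(\cl B, t), X \setminus B)$ is strictly positive (continuity on the compact interval $[T, T_*]$ plus the limit as $t \to \infty$ being $\dist(A, X\setminus B) > 0$), so any $\epsilon < \delta$ works. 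Applying this to an $(\epsilon, T)$-chain from $x \in A \subset \cl B$ to $y$ yields $y \in \cl B$; since $A$ admits arbitrarily small trapping regions (take forward images $\phi(B, t)$ for large $t$), intersecting over all such $B$ gives $y \in A$.

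For (ii), a repellor of $\phi$ is by definition an attractor of the time-reversed flow $\phi^{-1}$, so it suffices to show that $\chain{x}{y}$ in $\phi$ implies $\chain{y}{x}$ in $\phi^{-1}$ and then apply (i) in $\phi^{-1}$. Given any $(\epsilon, T)$-chain in $\phi$, first subdivide each step so every time lies in $[T, 2T]$ (inserting exact orbit points contributes no jump error). Then apply $\phi^{-1}(\cdot, t_i)$ to each inequality $\dist(\phi(x_i, t_i), x_{i+1}) < \epsilon$; the result $\dist(x_i, \phi^{-1}(x_{i+1}, t_i)) < \epsilon'$ holds with $\epsilon' \to 0$ as $\epsilon \to 0$ by uniform continuity of $\phi^{-1}$ on the compact set $X \times [T, 2T]$. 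Reversing the index order produces the desired $(\epsilon', T)$-chain from $y$ to $x$ in $\phi^{-1}$.

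For (iii), I argue by contrapositive: assuming $x$ chain recurrent and $\neg \chain{x}{y}$, I will exhibit an attractor in $\mathcal{A}_x \setminus \mathcal{A}_y$. Let $C := \{z : \chain{x}{z}\}$; chain recurrence places $x \in C$ via the trivial self-chain, and by hypothesis $y \notin C$. A tolerance-halving endpoint-replacement argument---if $z_n \in C$ with $z_n \to z$, take an $(\epsilon/2, T)$-chain to some $z_n$ with $\dist(z_n, z) < \epsilon/2$ and replace the final node by $z$, absorbing the extra $\epsilon/2$ via the triangle inequality---shows $C$ is closed. Via a standard Conley construction, $C$ is in fact an attractor: the open tolerance-sets $P_{\epsilon, T}(x) := \{z : \exists \, (\epsilon, T)\text{-chain from } x \text{ to } z\}$, after enlarging by short forward images to restore full forward-invariance, serve as trapping regions whose intersection as $\epsilon \to 0$ and $T \to \infty$ is $C$; the separation of $y$ from $C$ (quantitative, again using the halving trick) keeps $y$ outside the attractor. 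Hence $C \in \mathcal{A}_x \setminus \mathcal{A}_y$, contradicting the hypothesis.

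For (iv), chain equivalence $\chaineq{x}{y}$ unpacks to $\chain{x}{y}$ and $\chain{y}{x}$, which by (i) and (ii) yields $\mathcal{A}_x = \mathcal{A}_y$ and $\mathcal{R}_x = \mathcal{R}_y$. Conversely, if $x$ is chain recurrent, (iii) recovers $\chain{x}{y}$ from $\mathcal{A}_x \subseteq \mathcal{A}_y$, and the $\phi^{-1}$-analogue recovers $\chain{y}{x}$ from $\mathcal{R}_x \subseteq \mathcal{R}_y$; if both $x$ and $y$ are chain recurrent, applying (iii) to $y$ gives $\chain{y}{x}$ from $\mathcal{A}_y \subseteq \mathcal{A}_x$ as well, so the single equality $\mathcal{A}_x = \mathcal{A}_y$ (or symmetrically $\mathcal{R}_x = \mathcal{R}_y$) suffices. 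The main obstacle is step (iii): verifying that the tolerance-sets $P_{\epsilon, T}(x)$ can be enlarged into genuine trapping regions (they are only forward-invariant for times $\geq T$, not $[0, T)$) while still keeping $y$ quantitatively separated throughout the limiting process---the delicate ingredient is the enlargement $\tilde P := \bigcup_{s \in [0, T]} \phi(P_{\epsilon/2, T}, s)$ and the verification $\phi(\cl \tilde P, 2T) \subset \tilde P$ with strict interior inclusion.
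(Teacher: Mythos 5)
Your parts (i), (ii) and (iv) are sound: (i) is a correct direct trapping-region argument (the paper simply cites \cite{akin1984evolutionary} for this direction), the time-reversal reduction in (ii) is the same device the paper uses, and (iv) assembles the pieces as the paper does.

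The gap is in (iii), in the claim that $C=\{z:\chain{x}{z}\}$ is an attractor. What your construction actually produces is a \emph{family} of trapping regions, one for each pair $(\epsilon,T)$, whose intersection is $C$. That does not make $C$ an attractor: the attractor associated to a trapping region $V$ is $\bigcap_{t\ge 0}\phi(V,t)$, a generally much smaller set than $V$, and a nested infinite intersection of attractors need not be an attractor at all. Concretely, one can arrange a flow on $[0,1]$ with fixed points at $0$ and at $1/n$ for every $n$ and downward drift in between, for which $x=0$ is chain recurrent and $C=\{0\}$; but $\{0\}$ is not an attractor, since every neighbourhood of $0$ contains other fixed points that do not converge to it. So the concluding step ``hence $C\in\mathcal{A}_x\setminus\mathcal{A}_y$'' does not go through as written.

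The repair---and this is what the paper does---is to stay at a single fixed $(\epsilon,T)$ rather than passing to the limit. Since $y\notin C$, already $y\notin P_{\epsilon,T}(x)$ for some $\epsilon,T$; your enlargement of $P_{\epsilon,T}(x)$ to a trapping region $V$ yields an attractor $A=\bigcap_{t\ge 0}\phi(V,t)\subseteq P_{\epsilon,T}(x)$ avoiding $y$. The essential remaining point is to show $x\in A$ and not merely $x\in V$: membership in a trapping region does not imply membership in its attractor. This is precisely where chain recurrence of $x$ must be invoked through Theorem~\ref{recurrence characterisation}: $x$ lies in $A$ or in the dual repellor $A^{*}$, and $A^{*}$ is disjoint from $V\ni x$, so $x\in A$. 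Your write-up uses chain recurrence only to get $x\in C$ via the trivial self-chain and never invokes the attractor--repellor dichotomy, so this step is missing. (The paper in fact runs the argument in the positive direction: for every $(\epsilon,T)$ it extracts an attractor $A\ni x$ with $A\subseteq P_{\epsilon,T}(x)$, so any $y$ contained in all attractors containing $x$ lies in every $P_{\epsilon,T}(x)$, which is exactly $\chain{x}{y}$.)
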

This states that pseudo-orbits never leave (but possibly enter) attractors. Dually, a pseudo-orbit never enters repellors, but may leave them. The following result connects the response graph with pseudo-orbits.
\begin{lem}[\cite{omidshafiei2019alpha}] \label{graph reachable}
If $v$ and $w$ are pure profiles, and there is a path $v,v_1,v_2,\dots,w$ in the response graph, then $\chain{v}{w}$ under the replicator.
\end{lem}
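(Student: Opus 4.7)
My plan is to reduce the lemma to the single-arc case, then exploit subgame-independence of the replicator (Theorem~\ref{subgame invariance}) to work on the corresponding edge of the strategy space. Since pseudo-orbits are transitive and any two $(\epsilon,T)$-chains can be concatenated into a single $(\epsilon,T)$-chain, it will suffice to show that $\chain{v}{v'}$ whenever $\arc{v}{v'}$ is an arc of the response graph; the full path statement then follows by induction on the path length.

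For the single-arc step, suppose $v = (s;\bar v_{-i})$ and $v' = (t;\bar v_{-i})$ with $u_i(v) \leq u_i(v')$. By Theorem~\ref{subgame invariance}, the replicator flow restricted to the edge $E$ joining $v$ and $v'$ (that is, the one-dimensional strategy space of the subgame in which player $i$ chooses between $\{s,t\}$ and each other player $j$ is fixed at $\bar v_{-i,j}$) coincides with the replicator flow of that subgame. Parameterising $E$ by $\lambda \in [0,1]$ (the weight on $s$), a direct substitution into the replicator ODE gives $\dot\lambda = \lambda(1-\lambda)\bigl(u_i(v)-u_i(v')\bigr)$, which is nonpositive. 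So trajectories on $E$ are either stationary (the tie case) or move strictly monotonically from $v$ toward $v'$.

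Given $\epsilon, T > 0$, I would then construct an explicit $(\epsilon,T)$-chain in two cases. In the strict case $u_i(v) < u_i(v')$, every interior point of $E$ converges to $v'$, so I choose $y \in E$ with $\dist(v,y) < \epsilon$ and $t^* \geq T$ large enough that $\dist(\phi(y,t^*), v') < \epsilon$; then $(v, y, v')$ with times $(T, t^*)$ is an $(\epsilon,T)$-chain, using $\phi(v,T) = v$ for the first step. In the tie case $u_i(v) = u_i(v')$, every point of $E$ is a fixed point of the flow, so I place $n+1$ equally spaced points $v = x_1, x_2, \ldots, x_{n+1} = v'$ along $E$ with $\dist(v, v')/n < \epsilon$ and assign every dwell time equal to $T$; since $\phi(x_j,T) = x_j$, each consecutive gap is below $\epsilon$ and the chain is valid.

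The main obstacle is the tie case: the flow itself supplies no motion along $E$, so the existence of a pseudo-orbit has to come entirely from the jumps. The key observation that resolves this is that the pseudo-orbit definition places no bound on the number of steps in a chain, which allows the finite distance $\dist(v,v')$ to be traversed by arbitrarily many tiny $\epsilon$-hops. Everything else is routine once subgame-independence has cut the problem down to the one-dimensional flow on $E$.
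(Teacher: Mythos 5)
Your proof is correct and follows essentially the same route as the paper's (cited) argument: reduce to a single arc by transitivity of pseudo-orbits, use Theorem~\ref{subgame invariance} to restrict to the one-dimensional edge flow $\dot\lambda = \lambda(1-\lambda)\bigl(u_i(v)-u_i(v')\bigr)$, and build the $(\epsilon,T)$-chain from a small initial jump followed by the flow toward the head of the arc. Your explicit treatment of the tie case $u_i(v)=u_i(v')$ is a welcome addition, since the response graph's arcs are defined by weak inequality and the paper's sketch silently assumes strict motion along the edge.
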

The idea of this proof is that the points on an arc move toward in the direction of the arc under the replicator. Upon nearing the head of the arc, an $\epsilon$-jump will take us to the subsequent arc of the path. It follows that if $H$ is a connected component of the response graph, then all profiles in $H$ are contained in exactly one chain component, which we denote $[H]$.

\section{The Existence Theorem for Sink Chain Components}

In this section we establish that sink chain components exist under the replicator, as conjectured by~\citeauthor{papadimitriou2019game}. Specifically we prove that all chain recurrent points in the game can reach a sink chain component via a pseudo-orbit. Combining this with the Fundamental Theorem of Dynamical Systems, which states that all points converge to chain recurrent ones, we find that under the replicator there are pseudo-orbits from all points to sink chain components.
\begin{thm}[Existence of sink chain components] \label{scc theorem}
Let $x\in X$ be a chain recurrent point under the replicator. Then there exists a pure profile $y$ and pseudo-orbit $\chain{x}{y}$ where $y$ is in a sink connected component $H$ contained in a sink chain component $[H]$.
\end{thm}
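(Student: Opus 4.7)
The plan is to reach a sink chain component from $x$ by first producing a pseudo-orbit from $x$ to a pure profile sitting inside some sink connected component of the response graph, and then using finiteness of the response graph to locate such a component whose chain component is a sink.

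First I would establish a general reachability lemma: for every point $y \in X$ (not only chain-recurrent ones) there is a pseudo-orbit $\chain{y}{v}$ to some pure profile $v$. The argument proceeds by induction on $|\supp(y)|$. If $y$ lies in the relative interior of its support subgame $X(G_y)$, Theorem~\ref{subgame invariance} identifies the restricted flow with the replicator on $G_y$, and Theorem~\ref{no interior attractors} applied to $G_y$ forbids any attractor inside $\intr X(G_y)$. From this I would argue that the set of points pseudo-orbit reachable from $y$ cannot be confined to $\intr X(G_y)$: if it were, a suitable forward-invariant compact subset would furnish an attractor in the interior, contradicting Theorem~\ref{no interior attractors}. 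Hence $\chain{y}{y'}$ for some $y' \in \partial X(G_y)$, which lies in a support subgame of strictly smaller dimension, and induction terminates at a pure profile. Applying Lemma~\ref{graph reachable} along any response-graph path from $v$ to a sink connected component of the (finite) response graph then upgrades this to $\chain{y}{v^{*}}$ with $v^{*}$ in some sink connected component $H$.

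The second stage is a minimality argument over the finite set of sink connected components $\{H_1,\dots,H_k\}$. Each $H_i$ is a minimal attracting, and hence strongly connected, subgraph, so Lemma~\ref{graph reachable} places every profile of $H_i$ into a single chain component $[H_i]$. Define a preorder $H_i \Rightarrow H_j$ whenever $\chain{v_i}{v_j}$ for some $v_i\in H_i$, $v_j\in H_j$; Lemma~\ref{chains in attractors} shows that its equivalence classes are precisely distinct chain components. Applying the reachability lemma to $x$, some $[H_i]$ is reachable from $x$; among those reachable, pick $[H]$ minimal, which exists by finiteness. I claim $[H]$ is a sink chain component. If not, then $\chain{z}{w}$ for some $z\in [H]$ and some $w\notin [H]$; applying the reachability lemma to $w$ yields $\chain{w}{v''}$ with $v'' \in H'$, hence $H \Rightarrow H'$. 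If $[H']=[H]$, then $\chain{v''}{z}$, so $\chain{w}{z}$ by transitivity together with the assumed $\chain{z}{w}$, forcing $\chaineq{z}{w}$ and contradicting $w\notin[H]$. If $[H']\ne[H]$, then $[H']$ is reachable from $x$ (via $x\leadsto [H]\leadsto [H']$) and lies strictly below $[H]$ in the preorder, contradicting minimality. Thus $[H]$ is a sink chain component and the theorem follows.

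The main obstacle is the reachability lemma, specifically the step extracting a pseudo-orbit from an interior point $y$ to $\partial X(G_y)$ from the no-interior-attractor property. The subtlety is that replicator trajectories alone may be bounded away from $\partial X(G_y)$, as happens for instance with closed orbits around a centre in a zero-sum subgame, so the argument cannot rely on $\omega$-limit sets of true trajectories and must exploit the $\epsilon$-perturbations of pseudo-orbits together with the absence of interior attractors in a topologically careful way.
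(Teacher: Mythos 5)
Your argument is correct in outline and arrives at the same two-stage structure as the paper (reach a sink connected component, then take a minimal one), but the route to the first stage is genuinely different. The paper never proves your general reachability lemma and never inducts on supports: instead it works entirely in the lattice of attractors. It defines the preorder on sink connected components by attractor containment ($K\preceq H$ iff $\mathcal{A}_K\supseteq\mathcal{A}_H$), intersects all attractors containing the chain recurrent point $x$, uses Lemma~\ref{intersection of attractors} (finite intersections of attractors are attractors) together with Lemma~\ref{attractors contain nodes} (every attractor contains a sink connected component) to produce a sink connected component $K$ with $\mathcal{A}_x\subseteq\mathcal{A}_K$, and then invokes the converse direction of Lemma~\ref{chains in attractors} to get the pseudo-orbit $\chain{x}{K}$ directly. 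This sidesteps entirely the step you correctly flag as the main obstacle---escaping the interior of a support subgame by pseudo-orbits---and it is why the theorem is stated only for chain recurrent $x$ (general points are handled by the Fundamental Theorem afterwards). Your approach buys a stronger intermediate statement (every point, chain recurrent or not, has a pseudo-orbit to a pure profile), while the paper's buys a shorter proof that leans on the already-established attractor machinery. Your second stage is essentially the paper's second claim, with the preorder phrased via pseudo-orbit reachability rather than attractor containment; by Lemma~\ref{chains in attractors} these coincide on sink connected components (whose nodes are chain recurrent by Lemma~\ref{graph reachable}), so this is only a cosmetic difference.

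One caution on the part you left as a sketch: the escape-to-the-boundary step does go through, but the clean way to close it is exactly the trapping-region construction the paper uses inside its proof of Lemma~\ref{chains in attractors}---for fixed $\epsilon,T$ the $(\epsilon,T)$-reachable set from $y$ contains an attractor of the flow restricted to the support subgame, so by Theorem~\ref{no interior attractors} (applied to that subgame via Theorem~\ref{subgame invariance}) each reachable set must meet the boundary, and one then passes to the nested intersection over $\epsilon,T$ using compactness of the closures to extract a single boundary point reachable by a genuine pseudo-orbit. Note also that once you have that attractor you could invoke Lemma~\ref{attractors contain nodes} to land on a pure profile in one step, making your induction on supports unnecessary.
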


The proof proceeds in two steps. The first step is to show that there are pseudo-orbits from all chain recurrent points to a pure profile. Lemmas~\ref{attractors contain nodes} and \ref{graph reachable} establish that attractors of the replicator contain sink connected components. Lemma~\ref{chains in attractors} demonstrates that the chain components are defined by the structure of the attractors. If we knew that there were only finitely many attractors then we would be done---any minimal attractor is a sink chain component. To establish existence without assuming there are finitely many attractors we use the relationship with the sink connected components, which we know are finite in number. The key observation is that from any chain recurrent point $x$, there must exist a pure profile $p$ contained in all attractors containing $x$, which implies the existence of a pseudo-orbit from $x$ to $p$. Finally, we use the attractors of the system to order the sink connected components, and so find the sink chain components.

\section{Bounding the Sink Chain Components using Connected Components}

Theorem~\ref{scc theorem} establishes that sink chain components exist and contain sink connected components. Our goal in this section is to strengthen these results. That is, given a sink connected component $H$, what can we say about the unique sink chain component $[H]$ which contains it\footnote{Uniqueness here follows from the fact that chain components are disjoint. Each connected component $H$ is chain recurrent (Lemma~\ref{graph reachable}), and so is contained in a chain component.}?

We first prove a lower bound: $[H]$ always contains the \emph{content} of $H$.

\begin{defn}[Content] \label{def: content}
Let $W$ be a collection of pure profiles in a game $u$. The \emph{content} of $W$, denoted $\content(W)$, is the set of all mixed strategy profiles $x$ where all pure profiles in the support of $x$ are in $W$.
\end{defn}

The content is a topological subspace of the strategy space. If $W$ is a connected component of the response graph, then the content of $W$ is topologically connected. We now show that if $W$ is strongly connected in the graph, the content of $W$ is always a subset of $[W]$.

\begin{thm} \label{content containment}
Let $H$ be a strongly connected component of the response graph. Then the content of $H$ is contained in $[H]$, the chain component of the replicator containing $H$.
\end{thm}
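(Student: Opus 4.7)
The plan is to fix an arbitrary pure profile $v \in H$ and show that any $x \in \content(H)$ is chain equivalent to $v$, from which $x \in [v] = [H]$ follows. First, $v$ is chain recurrent: because $H$ is strongly connected in the response graph, Lemma~\ref{graph reachable} supplies a pseudo-orbit from $v$ to itself along a cycle in $H$. Next I would set up the subgame reduction. Let $T_i := \supp(x^i)$, so that $\supp(x) = \prod_i T_i \subseteq H$ by the definition of content, and $x$ lies in the interior of the subgame strategy space $X_T := \prod_i \Delta(T_i)$. By Theorem~\ref{subgame invariance}, $X_T$ is invariant under the replicator flow $\phi$, and $\phi|_{X_T}$ is exactly the replicator on the subgame $G_T$ whose pure profiles are $\prod_i T_i$. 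Since $v$ is chain recurrent, Lemma~\ref{chains in attractors} reduces the goal $\chaineq{x}{v}$ to proving the two inclusions $\mathcal{A}_v \subseteq \mathcal{A}_x$ and $\mathcal{R}_v \subseteq \mathcal{R}_x$.

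For the attractor inclusion, take any $A \in \mathcal{A}_v$. Strong connectedness of $H$ combined with Lemmas~\ref{graph reachable} and~\ref{chains in attractors} forces $A$ to contain every pure profile of $H$, and in particular every element of $\prod_i T_i$. If $B$ is a trapping region for $A$ in $X$, then $B \cap X_T$ is an open, forward-invariant neighbourhood of $A \cap X_T$ in $X_T$ whose eventual intersection is exactly $A \cap X_T$; thus $A \cap X_T$ is a non-empty attractor of $\phi|_{X_T}$. Now consider its dual repellor $R' := (A \cap X_T)^*$ inside $X_T$. Since $R'$ is disjoint from $A \cap X_T$ and the latter contains every pure profile of $G_T$, $R'$ contains no pure profile of $G_T$. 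But Lemma~\ref{attractors contain nodes}, applied to the time-reversed replicator on $G_T$ (itself a replicator dynamic, up to negating payoffs), guarantees that any non-empty attractor of that flow contains at least one pure profile. Hence $R' = \emptyset$, which on the compact invariant space $X_T$ forces $A \cap X_T = X_T$, and so $x \in A$. The proof of the repellor inclusion $\mathcal{R}_v \subseteq \mathcal{R}_x$ is completely symmetric, with the roles of attractor and dual repellor interchanged.

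The main obstacle I expect is the subgame step: rigorously verifying that $A \cap X_T$ is an attractor of the restricted flow $\phi|_{X_T}$, not merely a closed invariant subset. This depends on compactness and $\phi$-invariance of $X_T$ (from Theorem~\ref{subgame invariance}), together with careful bookkeeping about how the trapping region behaves under intersection with $X_T$ — in particular that $\cl_{X_T}(B \cap X_T) \subseteq \cl(B) \cap X_T$ and that $\bigcap_{t} \phi(B \cap X_T, t) = A \cap X_T$. A minor secondary technicality is the standard fact that on a compact invariant space a proper attractor has a non-empty dual repellor, which is used contrapositively to pass from $R' = \emptyset$ to $A \cap X_T = X_T$.
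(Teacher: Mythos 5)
Your argument is correct, but it is organised differently from the paper's. The paper proves the statement by induction on subgames: assuming all proper subgames of $Y$ lie in the chain component $K$, it supposes the restricted flow on $Y$ has a proper attractor $A$, forces $A$ to swallow the whole boundary of $Y$ (Theorem~\ref{no interior attractors} plus Lemma~\ref{chains in attractors}), and then kills the dual repellor $A^*$ because it would have to sit in the interior of $Y$ --- so the restricted flow has no proper attractors and all of $Y$ is chain equivalent. You instead avoid the explicit induction: you take each \emph{global} attractor $A\in\mathcal{A}_v$, show it contains all of $H$ via strong connectedness, restrict its trapping region to the invariant face $X_T$, and use Lemma~\ref{attractors contain nodes} applied to the time-reversed replicator (legitimately a replicator for $-u$) to conclude the dual repellor of $A\cap X_T$ in $X_T$ is empty, hence $A\cap X_T=X_T\ni x$; the symmetric repellor inclusion plus the converse direction of Lemma~\ref{chains in attractors} then gives $\chaineq{x}{v}$. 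The two proofs share the same engine --- a proper attractor/repellor of the subgame flow cannot avoid the pure profiles --- but you outsource the induction to Lemma~\ref{attractors contain nodes} (whose proof is itself the inductive repetition of Theorem~\ref{no interior attractors}), at the cost of the trapping-region bookkeeping you correctly flag; your verification sketch of that step ($\cl_{X_T}(B\cap X_T)\subseteq\cl(B)\cap X_T$, and $\bigcap_t\phi(B\cap X_T,t)=A\cap X_T$ via two-sided invariance of $A\cap X_T$) is sound. Two trivial touch-ups: $v$ is chain recurrent simply because pure profiles are rest points (a singleton strongly connected component has no cycle, though in that case the content is just $\{v\}$ and there is nothing to prove), and the passage from an empty dual repellor to $A\cap X_T=X_T$ is exactly the standard compactness fact you name, so state it once rather than leaving it implicit.
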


The proof uses induction on subgames, following Theorem~\ref{subgame invariance}. Specifically, if the entire boundary of a subgame is contained in the same chain component, then there can be no attractors in that subgame, by Lemma~\ref{chains in attractors} and Theorem~\ref{no interior attractors}. As a corollary, whenever the response graph is strongly connected the \emph{entire strategy space is chain recurrent}. An example is the Matching Pennies game (Figure~\ref{fig: 2x2}).
\begin{corol} \label{corol: strong connectedness}
In any game with a strongly connected response graph, every mixed profile is chain recurrent.
\end{corol}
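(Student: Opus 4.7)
The plan is to apply Theorem~\ref{content containment} directly, with the single strongly connected component being the entire response graph. Concretely, if the response graph $\mathcal{G}_u = (Z,A)$ is strongly connected, then $Z$ itself is a strongly connected component of $\mathcal{G}_u$. So we may take $H = Z$ in Theorem~\ref{content containment} and deduce that $\content(Z) \subseteq [Z]$.

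The next step is to observe that $\content(Z) = X$. By Definition~\ref{def: content}, $\content(Z)$ is the set of mixed profiles whose support lies in $Z$, but every mixed profile $x\in X$ has $\supp(x)\subseteq Z$ by definition. Hence $X = \content(Z) \subseteq [Z]$, so every mixed profile lies in the single chain component $[Z]$ and is therefore chain recurrent.

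The only thing that needs a brief remark is why $Z$ is strongly connected as a subgraph of $\mathcal{G}_u$ exactly when $\mathcal{G}_u$ itself is a strongly connected digraph, but this is immediate from the definitions (a digraph is strongly connected iff its whole vertex set forms one strongly connected component). There is no real obstacle here: once Theorem~\ref{content containment} is in hand, the corollary is essentially a one-line specialisation, and the proof amounts to unwinding the definition of \emph{content} when applied to the full vertex set.
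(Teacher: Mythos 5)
Your proof is correct and matches the paper's intended argument exactly: the corollary is stated as an immediate consequence of Theorem~\ref{content containment}, obtained by taking $H$ to be the entire (strongly connected) vertex set $Z$, whose content is all of $X$. Nothing further is needed.
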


Now we can obtain an upper bound in some cases: when the sink connected component $H$ is a subgame, then $[H] = \content(H)$. This is because attracting subgames are attractors.

\begin{lem} \label{attracting subgames}
If $Y$ is an attracting subgame, then the set of mixed profiles in $Y$ is an attractor under the replicator.
\end{lem}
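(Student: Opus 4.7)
The plan is to verify the criteria of Definition~\ref{def: attractor} for $X_Y \subseteq X$, the set of mixed profiles supported on pure profiles of $Y = \prod_i Y_i$. Compactness and non-emptiness are immediate, and Theorem~\ref{subgame invariance} delivers invariance: the support of a replicator orbit is constant, so any trajectory starting in $X_Y$ remains in $X_Y$. The substantive task is to exhibit a trapping region around $X_Y$, which together with invariance will establish the attractor property.

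The natural Lyapunov-style quantities are the deviation masses $f_i(x) := \sum_{t \notin Y_i} x^i_t$, one per player, since $x \in X_Y$ if and only if every $f_i(x) = 0$. Being attracting in the response graph forbids arcs from $Y$ to its complement, which forces a strict inequality $u_i(p) > u_i(q)$ for any $i$-comparable $p \in Y$ and $q \notin Y$ (a tie would yield arcs in both directions, one of which would leave $Y$). Taking a minimum over the finite collection of such comparisons produces a uniform $c > 0$, and linearity of $\exptu$ in each $x^j$ transports this to $\exptu_i(s; \bar x_{-i}) - \exptu_i(t; \bar x_{-i}) \geq c$ for all $x \in X_Y$, $s \in Y_i$, $t \notin Y_i$; continuity of $\exptu$ extends the inequality, with slack $c/2$, to some open neighborhood $U$ of $X_Y$ in $X$. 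A short manipulation of the replicator ODE then rewrites $\dot f_i$ as $\sum_{s \in Y_i,\, t \notin Y_i} x^i_s x^i_t \,[\exptu_i(t;\bar x_{-i}) - \exptu_i(s; \bar x_{-i})]$, which is bounded above by $-(c/2)\, f_i(1-f_i)$ on $U$. Choosing $\delta < 1/2$ small enough that the closed sublevel set $W := \{x \in X : f_i(x) \leq \delta \text{ for all } i\}$ is contained in $U$, each $f_i$ decays exponentially along orbits in $W$; hence $W$ is a forward-invariant closed neighborhood of $X_Y$ and $\phi(W,t)$ collapses onto $X_Y$ uniformly as $t \to \infty$, proving asymptotic stability.

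The main obstacle I expect is the first half of the middle step: cleanly deducing strict pure-payoff gaps from the ``no paths out of $Y$'' definition (which is phrased via the weak-inequality arcs of the response graph) and then ensuring the constant $c$ survives both the passage to mixed antiprofiles and the extension to a neighborhood. Once the uniform gap is in hand, the ODE computation and the sublevel-set trapping region are routine.
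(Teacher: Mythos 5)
Your proof is correct and follows the same overall strategy as the paper's: extract a uniform strict payoff gap $c>0$ between strategies inside and outside $Y$ from the attracting property, and use it to build a trapping neighbourhood on which the mass assigned to strategies outside $Y$ decays to zero. The execution differs in a way worth noting. The paper works coordinate-by-coordinate: it bounds each $\dot x^p_s$ for $s\notin Y_p$ on an explicit neighbourhood $\mathcal{Y}_M$ by splitting the replicator sum into contributions from antiprofiles in and out of $\bar Y_{-p}$ and opponent strategies in and out of $Y_p$, producing constants $\alpha^p_{s,t},\beta^p_{s,t},\gamma^p_s$ and a linear-in-$M$ expression whose sign is forced negative for small $M$. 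You instead aggregate into the per-player deviation masses $f_i=\sum_{t\notin Y_i}x^i_t$; the identity $\dot f_i=\sum_{s\in Y_i,\,t\notin Y_i}x^i_s x^i_t\left(\exptu_i(t;\bar x_{-i})-\exptu_i(s;\bar x_{-i})\right)$ is correct and collapses the whole estimate into the single differential inequality $\dot f_i\leq -(c/2)f_i(1-f_i)$, with the neighbourhood obtained softly by continuity and compactness rather than by explicit constants. Your version is shorter and avoids the paper's casework; the paper's is more quantitative, yielding an explicit radius for the trapping region. The one point you flagged as a possible obstacle is indeed unproblematic: an arc from $p$ to $q$ exists whenever $u_i(p)\leq u_i(q)$, so a payoff tie between an $i$-comparable pair with $p\in Y$ and $q\notin Y$ would create an arc leaving $Y$, contradicting attractingness; the passage to mixed antiprofiles for $x$ in the content of $Y$ is a convex combination of terms each at least $c$, and the extension to a neighbourhood with slack $c/2$ follows from compactness. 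The final step (uniform collapse onto $Y$ from exponential decay of the $f_i$) is routine and matches the norm computation closing the paper's proof.
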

Pure NEs are attracting \emph{singleton subgames}---by this lemma, they are always attractors under the replicator, as we know from existing results~\citep{vlatakis2020no}. Combining this result with Theorem~\ref{content containment} gives us:
\begin{corol} \label{sink subgames}
If a sink component $H$ is a subgame, then $[H] = \content(H)$.
\end{corol}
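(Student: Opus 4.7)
The plan is to combine Theorem~\ref{content containment} (which provides the lower bound $\content(H) \subseteq [H]$) with Lemma~\ref{attracting subgames} (which, together with Lemma~\ref{chains in attractors}, provides the upper bound $[H] \subseteq \content(H)$), and to observe that $\content(H)$ coincides set-theoretically with the strategy space $Y$ of the subgame.

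First I would note that $\content(H) = Y$ as subsets of $X$: since the pure profiles of the subgame $Y$ are precisely $H$, a mixed profile lies in the strategy space of $Y$ iff its support is contained in $H$, which is the defining condition in Definition~\ref{def: content}. Next, because $H$ is a sink connected component of the response graph, it is strongly connected, so Theorem~\ref{content containment} immediately yields \[ \content(H) \subseteq [H]. \]

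For the reverse inclusion I would use attraction. Since $H$ is a sink connected component, no response-graph arc leaves $H$, and so the subgame $Y$ (whose pure profiles are exactly $H$) is an attracting subgame in the sense of Lemma~\ref{attracting subgames}. That lemma then promotes $Y$ to an attractor under the replicator flow. Pick any chain recurrent point $x \in H$; then $Y \in \mathcal{A}_x$. For an arbitrary $y \in [H]$, chain equivalence $\chaineq{x}{y}$ together with Lemma~\ref{chains in attractors} gives $\mathcal{A}_x = \mathcal{A}_y$, so $y \in Y = \content(H)$. Hence $[H] \subseteq \content(H)$, and chaining the inclusions produces the advertised equality $[H] = \content(H) = Y$.

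The only step that requires any care is verifying that "$H$ is a sink connected component of the response graph" translates to "$Y$ is an attracting subgame" in the precise sense required by Lemma~\ref{attracting subgames}; this is essentially by definition, since sink connected components have no outgoing arcs, which matches the subgraph-level notion of attracting given in Section~\ref{sec: preliminaries}. No genuine obstacle arises, and the proof is a short synthesis of the two main tools already developed.
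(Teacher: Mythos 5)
Your proposal is correct and follows essentially the same route as the paper: Theorem~\ref{content containment} for the inclusion $\content(H)\subseteq[H]$, and Lemma~\ref{attracting subgames} combined with Lemma~\ref{chains in attractors} (pseudo-orbits cannot leave the attractor $Y$) for the reverse inclusion. Your version merely spells out the attractor-containment step via $\mathcal{A}_x=\mathcal{A}_y$ a little more explicitly than the paper does.
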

\begin{proof}
Sink connected components are attracting, so $H$ is an attracting subgame, and thus is an attractor under the replicator (Lemma~\ref{attracting subgames}). The content of $H$ is exactly the mixed profiles in that subgame, and by Theorem~\ref{content containment}, all points in the content of $H$ are in $[H]$, and so are chain equivalent. However, as an attractor, no pseudo-orbit can leave this subgame by Lemma~\ref{chains in attractors} and so $[H] = \content(H)$.
\end{proof}

The sink connected component is generally not a subgame, but when this is the case our upper and lower bounds meet, and the sink chain component can be precisely characterised.

\subsection{Zero-sum and Potential games}

A number of special properties are known for potential~\citep{monderer1996potential} and zero-sum~\citep{von2007theory} games. The current understanding of the chain components of such games comes from~\cite{piliouras2014optimization,papadimitriou2016nash,papadimitriou2018nash,mertikopoulos2018cycles} where it is proved that the chain components of a weighted potential game are its pure Nash equilibria, and when a two-player zero-sum game has an interior equilibrium, all profiles are chain recurrent. We generalise the former result in Theorem~\ref{sinks in preference potential} as a straightforward consequence of Corollary~\ref{sink subgames}.

The case of two-player zero-sum games is more intriguing. All two-player zero-sum games with interior equilibria that we know of have \emph{strongly connected response graphs}. That is, in all known cases, this result is subsumed by Corollary~\ref{corol: strong connectedness}. Indeed, if Conjecture~\ref{sink ccs = content sink eq} is true, then this would become a theorem: any two-player zero-sum game with an interior equilibrium would have a strongly connected response graph
. Corollary~\ref{corol: strong connectedness} is much broader. Many games are strongly connected but not two-player zero-sum with an interior equilibrium. Focusing on two-player games, we find that the response graphs of all $2\times n$ zero-sum games are strongly connected (Lemma~\ref{2xn connected}). On $3\times 3$ strict games, there are only two response graphs of zero-sum games which are \emph{not} strongly connected~\citep{biggar2022twoplayer}. These are the Inner and Outer Diamond graphs~\citep{biggar2022twoplayer}, depicted in Figure~\ref{fig: inner and outer diamond}. No zero-sum game with either of these response graphs ever has an interior equilibrium (Lemma~\ref{lem: diamond games}). 

Following~\cite{biggar2022twoplayer}, we call a game \emph{preference-zero-sum} if its response graph is isomorphic to that of a two-player zero-sum game. We call it \emph{preference-potential} if its response graph is isomorphic to that of a potential game. This equivalence relation is quite broad. In particular, any weighted potential game is preference-potential, but not vice versa.

\begin{thm} \label{sinks in preference potential}
In a strict preference-potential game under the replicator, the sink chain components are exactly $\{p_1\},\dots,\{p_m\}$ where $p_1,\dots,p_m$ are the pure Nash equilibria.
\end{thm}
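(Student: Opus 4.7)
The plan is to reduce the claim to Corollary~\ref{sink subgames} by showing that, in a strict preference-potential game, every sink connected component of the response graph is a singleton pure Nash equilibrium (which is trivially a subgame). The key intermediate step is to establish that the response graph $\mathcal{G}_G$ of a strict preference-potential game $G$ is a directed acyclic graph.

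To prove acyclicity, let $G'$ be a potential game with $\mathcal{G}_G \cong \mathcal{G}_{G'}$ and potential $\Phi$. Suppose for contradiction that $\mathcal{G}_G$ contains a directed cycle. By the isomorphism, $\mathcal{G}_{G'}$ contains a corresponding cycle. Along each arc in $\mathcal{G}_{G'}$, the potential $\Phi$ is weakly increasing (combining the definition of a potential game with that of the response graph). Since the cycle returns to its start, $\Phi$ must be constant around it, forcing each arc in the cycle to have equal utility for the differing player at both endpoints---so each such arc in $\mathcal{G}_{G'}$ is bidirectional. Transferring back through the isomorphism, the corresponding cycle in $\mathcal{G}_G$ would also consist of bidirectional arcs, contradicting the strictness of $G$. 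Hence $\mathcal{G}_G$ is acyclic.

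In a directed acyclic graph, the sink connected components (minimal attracting subgraphs) are exactly the nodes with no outgoing arcs. In the response graph, such nodes are precisely the pure Nash equilibria. Hence the sink connected components of $\mathcal{G}_G$ are $\{p_1\},\dots,\{p_m\}$, each a singleton subgame (with each player's strategy set restricted to their coordinate in $p_i$). By Corollary~\ref{sink subgames}, $[\{p_i\}] = \content(\{p_i\}) = \{p_i\}$, so each $\{p_i\}$ is a sink chain component. Conversely, Theorem~\ref{scc theorem} together with the disjointness of chain components forces every sink chain component to contain---and hence equal---one of the $\{p_i\}$. The main obstacle is the acyclicity step: it requires carefully leveraging the strictness of $G$ to rule out the bidirectional arcs that would otherwise permit potential-preserving cycles in any isomorphic potential game; once this is in place, the theorem follows directly from Corollary~\ref{sink subgames}.
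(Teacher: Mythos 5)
Your proposal is correct and follows essentially the same route as the paper's proof: reduce to Corollary~\ref{sink subgames} by observing that the sink connected components of a strict preference-potential game are precisely the singleton pure Nash equilibria, then invoke Theorem~\ref{scc theorem} and disjointness of chain components for the converse direction. The only difference is that you explicitly prove the acyclicity of the response graph (correctly, via constancy of the potential around a cycle and strictness ruling out bidirectional arcs), a fact the paper leaves implicit via its citation of the response-graph literature.
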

\begin{proof}
Each pure Nash equilibrium $p$ in a strict game is an attracting subgame, so by Corollary~\ref{sink subgames} the associated sink chain component is precisely $\{p\}$. Since every sink chain component contains a sink connected component, and thus a pure Nash equilibrium, these are precisely the sink chain components.
\end{proof}

Recalling the Coordination game (Figure~\ref{fig: 2x2}), preference-potential games can have mixed Nash equilibria, but these are \emph{never in the sink chain components}, while the pure Nash equilibria are always precisely the sink chain components. Beyond preference-potential games, this argument applies to any game where every sink connected component is a subgame, such as \emph{weakly acyclic games}~\citep{young1993evolution}.

\begin{lem} \label{lem: unique}
Every preference-zero-sum game has exactly one sink connected component and thus precisely one sink chain component.
\end{lem}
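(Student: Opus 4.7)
Since ``preference-zero-sum'' is a property of the response graph alone, and sink connected components are graph-theoretic objects, it suffices to prove uniqueness for some actual two-player zero-sum game $u$ whose response graph coincides with the given one; by a small generic payoff perturbation we may take $u$ to be strict without disturbing the graph. The plan is to exhibit a ``core'' set of profiles that every sink connected component is forced to contain, which would force uniqueness.

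The natural candidate is the support rectangle $A \times B$ of a mixed Nash equilibrium $(x^*, y^*)$ of $u$ (existence by von Neumann's minimax theorem), with value $v$, $A = \supp(x^*)$ and $B = \supp(y^*)$. The indifference principle gives $u_1(s_1, y^*) = v$ for $s_1 \in A$ and $u_1(s_1, y^*) \leq v$ otherwise, with symmetric inequalities for $B$. I would then prove two claims: (i) every pure profile admits a directed path in the response graph into $A \times B$, and (ii) the induced subgraph on $A \times B$ lies in a single strongly connected component. Together these give a unique sink of the condensation DAG, hence a unique sink connected component; the conclusion for sink chain components is then immediate from Theorem~\ref{scc theorem}, since distinct sink chain components are disjoint and would each contain a distinct sink connected component.

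For claim (ii), the case $|A| = |B| = 1$ is immediate (the rectangle is a single pure Nash equilibrium). When $|A|, |B| \geq 2$, the subgame on $A \times B$ is a zero-sum game in which every pure strategy in $A$ (resp.\ $B$) is already a best response to the mixed equilibrium, so a structural argument from \cite{biggar2022twoplayer} on such zero-sum subgames yields strong connectivity of the induced response graph. For claim (i), the idea is a case split on whether $u_1(s_1,s_2) \geq v$: if so, a player-$2$ arc drops $u_1$ down to some $(s_1, s_2')$ with $s_2' \in B$ (because $u_1(s_1,y^*) \leq v$ forces some $s_2' \in B$ with $u_1(s_1,s_2') \leq v \leq u_1(s_1,s_2)$), and then a player-$1$ arc lifts us to $(s_1'', s_2')$ with $s_1'' \in A$ (because $u_1(x^*, s_2') = v$, so the maximum over $A$ of $u_1(\cdot, s_2')$ is at least $v$). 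The dual routing handles the case $u_1(s_1, s_2) < v$.

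The main obstacle is making claim (i) airtight: the mixed-equilibrium inequalities are statements about expected payoffs, whereas response-graph arcs record only pairwise pure comparisons. One must therefore exploit the fact that an average being $\leq v$ forces at least one pure term to be $\leq v$, and interleave player-$1$ and player-$2$ arcs carefully to avoid getting stuck on a profile $(s_1,s_2)$ with $s_1 \notin A$ that happens to be a strong pure response to this particular $s_2$. I expect the cleanest write-up to produce a path of length at most two into $A \times B$ via the detour just sketched, with the case $|A|=|B|=1$ handled separately.
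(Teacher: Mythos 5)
There is a genuine gap, and it sits exactly where your argument has to do real work. For context: the paper's own proof is essentially a citation --- uniqueness of the sink connected component is taken from Theorem 4.9 of \cite{biggar2022twoplayer}, and the only added step is the one you also perform at the end (Theorem~\ref{scc theorem} gives existence, sink chain components are disjoint and each contains a sink connected component, hence there is exactly one). You are instead attempting to re-derive the graph-theoretic uniqueness from the minimax theorem. Your claim (i) is correct and rather nice: the averaging argument does yield a path of length at most two from any pure profile into the support rectangle $A\times B$. The problem is claim (ii). The subgame on $A\times B$ is a zero-sum game with a fully mixed (interior) equilibrium, and what you need is that such a game has a strongly connected response graph (or at least that $A\times B$ lies in a single strongly connected component of the ambient graph). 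No concrete ``structural argument from \cite{biggar2022twoplayer}'' is identified that delivers this, and the paper itself explicitly treats the underlying statement as open: it notes that all \emph{known} two-player zero-sum games with interior equilibria have strongly connected response graphs, that this would become a theorem only if Conjecture~\ref{sink ccs = content sink eq} holds, and that any counterexample must be at least $3\times 4$. Even the $3\times 3$ instances (Lemma~\ref{lem: diamond games}) are settled in the paper by dynamical, not combinatorial, arguments. So claim (ii) is an unproven assertion, and claim (i) alone does not close the argument: without (ii), two distinct sink connected components could each absorb different profiles of $A\times B$.

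A smaller issue: the reduction to a strict zero-sum game ``by a small generic payoff perturbation'' is not innocuous, since perturbing a tie deletes one direction of a bidirected arc and hence changes the response graph, while the lemma as stated does not assume strictness. But the substantive obstruction is claim (ii); as written, your proof assumes a statement that the paper identifies as conjectural.
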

\begin{proof}
A sink chain component exists by Theorem~\ref{scc theorem}. Every preference-zero-sum game has exactly one sink connected component by \cite[Theorem 4.9]{biggar2022twoplayer}. Sink chain components are disjoint and contain sink connected components, so there must be precisely one.
\end{proof}

We conclude that, at least in preference-zero-sum and preference-potential games, there is a one-to-one correspondence between sink chain components and sink connected components.

\section{Applications} \label{sec: applications}

\begin{figure}
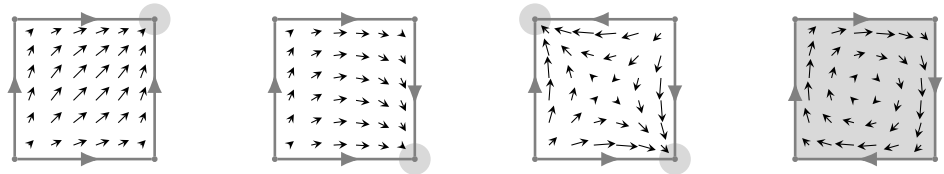

    \centering
    \includestandalone{figs/DD}
    \includestandalone{figs/SD}
    \includestandalone{figs/CO}
    \includestandalone{figs/MP}
    \caption{Sink chain components (highlighted in grey) of the replicator dynamic on the strict 2x2 games. These response graphs are called Double-Dominance, Single-Dominance, Coordination and Matching Pennies~\citep{biggar2022twoplayer}. We have overlaid a typical example of the replicator vector field for clarity in each case.}
    \label{fig: 2x2}
\end{figure}

In this section we identify the sink chain components of the replicator on a collection of small games, demonstrating how our findings extend the existing literature.

\begin{itemize}
    \item \textbf{(Any game equivalent to) $2\times 2$ Coordination}: This game has been well-studied~\citep{papadimitriou2016nash,papadimitriou2018nash,papadimitriou2019game}. It is known that any game sharing a response graph with $2\times 2$ Coordination (Figure~\ref{fig: 2x2}) has two sink chain components which are the two pure Nash equilibria. All such games are preference-potential, and so this is an immediate corollary of Theorem~\ref{sinks in preference potential}.
    \item \textbf{(Any game equivalent to) Matching Pennies}: Again, this game is well-studied~\citep{papadimitriou2016nash,papadimitriou2018nash,papadimitriou2019game,balduzzi2018mechanics}, and it is known that all profiles are chain recurrent. However, it is assumed in these papers that the game is zero-sum. Because the response graph of Matching Pennies is a cycle, it is strongly connected and so by Corollary~\ref{corol: strong connectedness} any game sharing a response graph with Matching Pennies is chain recurrent. Almost all such games are not zero-sum.
    \item \textbf{\emph{Any} $2\times 2$ strict game}: There are only four response graphs of strict $2\times 2$ games~\citep{biggar2022twoplayer}, of which we have already discussed two (Matching Pennies and $2\times 2$ Coordination). The remaining two (Single- and Double-dominance) are preference-potential, so Theorem~\ref{sinks in preference potential} applies. See Figure~\ref{fig: 2x2}.
    \item \textbf{\emph{Any} $2\times 3$ strict game}: We can focus on games without dominated strategies, because the sink chain component will never contain such strategies~\citep{sandholm2010population}. By~\cite{biggar2022twoplayer}, there are three response graphs of $2\times 3$ games without dominated strategies. One is strongly connected, and so all profiles are chain recurrent (Corollary~\ref{corol: strong connectedness}); another is preference-potential; in the final graph the unique sink connected component is a pure Nash equilibrium, so in both of the latter cases the sink chain components are exactly the pure Nash equilibria (Theorem~\ref{sinks in preference potential}).
    \item \textbf{\emph{Any} $2\times n$ strict game}: Generalising the previous cases, it turns out (Lemma~\ref{2xn connected}) that sink connected components of strict $2\times n$ games are always subgames, so Corollary~\ref{sink subgames} applies. Additionally, every preference-zero-sum $2\times n$ game without dominated strategies is strongly connected.
    \item \textbf{\emph{Any} $3\times 3$ strict preference-zero-sum game}: By~\cite{biggar2022twoplayer}, all but two of the response graphs of such games are strongly connected, so Corollary~\ref{corol: strong connectedness} applies. The two exceptions are known as the Inner and Outer Diamond graphs (Figure~\ref{fig: inner and outer diamond}). The unique (Lemma~\ref{lem: unique}) sink connected component in the Inner Diamond game is a pure Nash equilibrium, so this is the unique sink chain component by Theorem~\ref{sinks in preference potential}. The Outer Diamond game is also preference-zero-sum, so has a unique sink chain component containing the content of its sink connected component. We conjecture (Conjecture~\ref{sink ccs = content sink eq}) that the sink chain component is always precisely the content of the sink connected component.
\end{itemize}

\begin{figure}
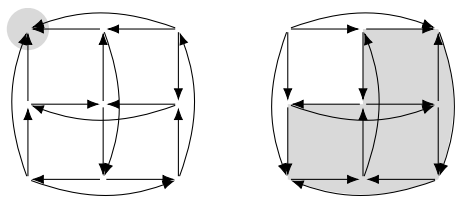

    \centering
    \includestandalone{figs/inner_diamond}
    \qquad
    \includestandalone{figs/outer_diamond}
    \caption{The Inner and Outer Diamond graphs and the content of their sink connected components.}
    \label{fig: inner and outer diamond}
\end{figure}

\section{Conclusions and Open Problems} \label{sec: conjectures}

In this paper we demonstrated the influence of the response graph on the chain components of the replicator dynamic. In particular, we showed that sink chain components exist and contain the content of sink connected components of the response graph.

The close relationship between connected components and chain components is interesting from the perspective of both game theory and dynamical systems, and poses several questions that are deserving of further investigation. In this section we outline some of the most important. In all the cases we discussed above where we could identify the sink chain components, each not only contained a sink connected component but in fact exactly one, and each sink connected component was contained in a sink chain component. We conjecture that this relationship is generally true.
\begin{conj} \label{conj: one to one}
There is a one-to-one correspondence between sink chain components and sink connected components.
\end{conj}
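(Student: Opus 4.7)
My plan is to prove Conjecture \ref{conj: one to one} by showing the assignment $H \mapsto [H]$ from sink connected components to sink chain components is a bijection. Surjectivity is already delivered by Theorem \ref{scc theorem}, since every sink chain component contains at least one sink connected component. The entire content of the conjecture therefore reduces to \textbf{injectivity}: if $H_1 \neq H_2$ are distinct sink connected components, then $[H_1] \neq [H_2]$.

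By Lemma \ref{chains in attractors}, injectivity is equivalent to exhibiting, for each pair $H_1 \neq H_2$, an attractor $A$ with $H_1 \subseteq A$ but $H_2 \not\subseteq A$. The strategy is to construct a minimal attractor $A_{H_1}$ associated with $H_1$ alone, arguing by induction on subgame dimension in the style of Theorem \ref{content containment}. The base case, $H_1$ a subgame, is already handled by Corollary \ref{sink subgames}: $A_{H_1}$ is the set of mixed profiles supported in $H_1$, which is clearly disjoint from $H_2$. For the general inductive step, let $R_1$ denote the set of pure profiles from which $H_1$, but not $H_2$, is graph-reachable in reverse; since sink connected components are attracting in the graph, $R_1 \supseteq H_1$ and $R_1 \cap H_2 = \emptyset$. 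The candidate $A_{H_1}$ is the $\omega$-limit under the replicator of a small forward-invariant neighbourhood of $\content(R_1)$. Verifying that this neighbourhood is a trapping region relies on recursively applying Theorem \ref{no interior attractors} and Lemma \ref{attractors contain nodes} to subgame faces: every face disjoint from a neighbourhood of $H_2$ admits no interior attractors and has its limit behaviour controlled by graph-attracting node sets, which by construction of $R_1$ cannot include $H_2$.

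The principal obstacle, and the reason the conjecture remains open, is that this construction implicitly assumes the replicator flow respects the ordinal reachability data of the response graph. On subgame faces spanning profiles in the basins of both $H_1$ and $H_2$, interior mixed equilibria or Hamiltonian-type cycles (which do arise in preference-zero-sum games and their generalisations) could, a priori, support pseudo-orbits that bridge the two sink connected components even though no graph path between them exists; the replicator depends on cardinal payoffs whereas the graph is purely ordinal, so the combinatorial data underlying $R_1$ may be too coarse to isolate $H_1$. Sealing this gap will likely require either a new invariant of the replicator beyond volume preservation (Theorem \ref{no interior attractors}), or a refined duality argument: for example, constructing an explicit dual repellor for each sink connected component supported on subgame faces whose graph-attracting node sets are \emph{disjoint} from $H_1$, so that Lemma \ref{chains in attractors} applied in both directions rules out pseudo-orbits between $H_1$ and $H_2$. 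Turning the informal trapping-region construction above into a rigorous proof for games with arbitrary response graphs is the essential new work the conjecture demands.
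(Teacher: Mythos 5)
This statement is Conjecture~\ref{conj: one to one}; the paper gives no proof and explicitly leaves it open (``We do not currently know of counterexamples to either conjecture''), so there is no argument of the authors' to measure yours against. Your opening reduction is sound and in fact mirrors the paper's own remark immediately after the conjecture: by Theorem~\ref{scc theorem} and Lemma~\ref{chains in attractors}, the whole question is equivalent to producing, for each sink connected component $H_1$, an attractor containing $H_1$ and no other sink connected component. One small omission in your framing: besides injectivity you also need that $[H]$ \emph{is} a sink chain component for \emph{every} sink connected component $H$, not only for the minimal elements $M_i$ constructed in the proof of Theorem~\ref{scc theorem}; this too is subsumed by the attractor formulation, but it is not literally ``surjectivity plus injectivity'' of a map already known to land in the right codomain.

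The construction you sketch does not close the gap, as you candidly acknowledge. To be concrete about where it breaks: the set $\content(R_1)$ is in general not invariant under the replicator, and no neighbourhood of it need be a trapping region. The only tool in the paper for certifying a trapping region from graph data is Lemma~\ref{attracting subgames}, whose proof hinges on the cardinal estimate $\alpha^p_{s,t}<0$, available precisely because the target $Y$ is a \emph{subgame} with no outgoing arcs; for a general reverse-reachability set $R_1$ the analogous payoff differences have no controlled sign, and attractors of the replicator need not have the form $\content(W)$ for any node set $W$ at all (that they do is essentially the content of Conjecture~\ref{sink ccs = content sink eq}). Moreover, Theorem~\ref{no interior attractors} and Lemma~\ref{attractors contain nodes} only constrain where attractors may live; they cannot manufacture one, so your recursive face-by-face argument rules things out but never produces the separating attractor. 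Your diagnosis of the obstruction---the replicator depends on cardinal payoffs while the graph is purely ordinal, so mixed faces spanning both basins could support pseudo-orbits bridging graph-unreachable components---is exactly the reason this is stated as a conjecture rather than a theorem.
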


Using the construction in the proof of Theorem~\ref{scc theorem}, we can cast this question as one about attractors: Conjecture~\ref{conj: one to one} is true if and only if each sink connected component has at least one attractor containing it and no others. If true, it would greatly strengthen the premise of~\citeauthor{papadimitriou2019game} that sink connected components are the \emph{right} combinatorial surrogate of sink chain components. However, even if such a one-to-one correspondence existed, it wouldn't necessarily specify precisely which points are in sink chain components. The following stronger conjecture does do this:

\begin{conj} \label{sink ccs = content sink eq}
The sink chain components of a game are exactly the content of the sink connected components.
\end{conj}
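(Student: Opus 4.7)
The plan is to prove the equality $[H] = \content(H)$ for each sink connected component $H$, which, combined with Theorem~\ref{scc theorem}, immediately yields the conjectured bijection between sink chain components and sink connected components. Theorem~\ref{content containment} already supplies the forward inclusion $\content(H) \subseteq [H]$, so the crux is the reverse inclusion $[H] \subseteq \content(H)$. My approach is to prove the following \emph{separation property}: for each sink connected component $H$ and each mixed profile $x \notin \content(H)$, there exists an attractor $A$ of the replicator with $H \subseteq A$ and $x \notin A$. Granting this, Lemma~\ref{chains in attractors} immediately forbids any pseudo-orbit from a point of $H$ to $x$, and hence $x \notin [H]$.

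To construct the separating attractor, exploit $\supp(x) \not\subseteq H$ by fixing a pure profile $p \in \supp(x) \setminus H$ and consider the closed, forward- and backward-invariant set
\[ W_p \;=\; \{\, y \in X : p \notin \supp(y) \,\} \;=\; \bigcup_{i=1}^{N} \{\, y \in X : y^i_{p_i} = 0 \,\}, \]
which contains $H$ (because $p \notin H$) but not $x$ (because $p \in \supp(x)$). Each face $\{y : y^i_{p_i} = 0\}$ is the strategy space of a subgame, so by Theorem~\ref{subgame invariance} the replicator restricts to it. Working inductively on the dimension of these faces, and using Lemma~\ref{attractors contain nodes} together with the closure of attractors under intersection invoked in the proof of Theorem~\ref{scc theorem}, I would extract a minimal attractor $A$ of the ambient flow that lies inside $W_p$ and still contains $H$ as an attracting set of nodes.

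The main obstacle is the case in which $H$ is not itself a subgame; when $H$ is a subgame Corollary~\ref{sink subgames} finishes the job directly. In the non-subgame case, $\content(H)$ decomposes as a union of content-sets of strictly smaller subgames contained in $H$, and some care is needed to show that the inductive construction of $A$ stays inside $\content(H)$ rather than leaking into some neighbouring $\content(T')$ with $T' \not\subseteq H$. The crucial input should be that every arc of the response graph crossing the boundary of $H$ points \emph{into} $H$; together with an analysis analogous to that underpinning Theorem~\ref{content containment}, this should force the mass on pure profiles outside $H$ to decay under the replicator near $\content(H)$, allowing the local trapping regions on each face $\{y : y^i_{p_i} = 0\}$ to be glued into a global trapping region whose $\omega$-limit is the desired attractor $A$.

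Once $[H] = \content(H)$ is established for each sink connected component $H$, Conjecture~\ref{conj: one to one} follows immediately: distinct sink connected components have disjoint contents (any shared pure profile would place them in the same connected component of the response graph), so their associated chain components are disjoint, and by Theorem~\ref{scc theorem} every sink chain component equals $\content(H)$ for the unique sink connected component it contains. This resolves Conjecture~\ref{sink ccs = content sink eq}.
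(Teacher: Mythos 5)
First, note that the statement you are trying to prove is stated in the paper as an open conjecture: the paper offers no proof of it, and in Section~\ref{sec: conjectures} it explicitly records that the conjecture ``holds if and only if the content of a sink chain component is always an attractor.'' Your proposed \emph{separation property} --- for every $x\notin\content(H)$ an attractor containing $H$ but not $x$ --- is exactly equivalent to that reformulation: by Lemma~\ref{intersection of attractors} there is a smallest attractor $A_{\min}$ containing $H$, and since pseudo-orbits cannot leave attractors (Lemma~\ref{chains in attractors}) we have $\content(H)\subseteq [H]\subseteq A_{\min}$, so your separation property holds for all $x$ precisely when $A_{\min}=\content(H)$, i.e.\ precisely when $\content(H)$ is an attractor. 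So the proposal restates the open problem rather than resolving it, and the actual construction offered does not close the gap.

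Concretely, the construction fails at two points. The set $W_p=\bigcup_i\{y: y^i_{p_i}=0\}$ is closed and invariant but is not an attractor (in Matching Pennies, for instance, no proper nonempty attractor exists at all), and nothing cited guarantees that \emph{any} attractor containing $H$ sits inside $W_p$: Lemma~\ref{attractors contain nodes} bounds attractors from \emph{below} by attracting node sets, and Lemma~\ref{intersection of attractors} only produces $A_{\min}$, about whose location relative to $W_p$ it says nothing. Second, the claimed decay of mass on pure profiles outside $H$ ``near $\content(H)$'' is modelled on the proof of Lemma~\ref{attracting subgames}, but that proof depends essentially on $Y$ being a subgame: one needs $\alpha^p_{s,t}=\max_{\bar q\in\bar Y_{-p}}(u_p(s;\bar q)-u_p(t;\bar q))<0$ for every $t\in Y_p$, which is exactly what the attracting-subgame hypothesis delivers. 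When $H$ is not a subgame (the only case at issue, since Corollary~\ref{sink subgames} settles the subgame case), $\content(H)$ is a union of faces of distinct product sets, a strategy $s$ may belong to $H$ at some antiprofiles and not others, and the required sign condition fails at the junctions of those faces --- which is precisely where a trapping region would have to be glued. The Outer Diamond game in Section~\ref{sec: applications} is the smallest instance where this obstruction is live and the paper itself cannot determine whether $[H]=\content(H)$. Until you can prove that $\content(H)$ admits a trapping region in this non-subgame setting, the argument does not go through.
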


Again, we can rephrase this as a statement about attractors: the conjecture holds if and only if the content of a sink chain component is always an attractor. Conjecture~\ref{sink ccs = content sink eq} would be particularly shocking, because it would imply that the `long-run' outcome of the replicator dynamic is \emph{entirely dictated} by the response graph. That is, from the perspective of the replicator, only the preference orders of each player affect the chain components. We do not currently know of counterexamples to either conjecture. 

There is another important question: to which other dynamics do these results generalise? In particular, the steep FTRL dynamics~\citep{vlatakis2020no} are known to satisfy Theorem~\ref{no interior attractors}, which is necessary for these results, so it seems plausible that these results may apply to this class.

\bibliography{references}

\appendix

\section{Proofs}

In addition to the notation introduced in Section~\ref{sec: preliminaries}, we will use some additional notation in the proofs in this appendix. If $Y$ is a subgame, we denote the strategies used by player $p$ in $Y$ by $Y_p$. Similarly, we denote the $p$-antiprofiles where all other players play strategies in $Y$ by $\bar Y_{-p}$.  Given a $p$-antiprofile $\bar q$, we write $x_{\bar q}$ as shorthand for $\prod_{j\neq p} x^j_{\bar q_j}$.

We will use a rearranged form of the replicator equation in several proofs. 
\begin{lem} \label{lem: other replicator form}
The replicator equation is equivalent to
\[
\dot x^p_s = x_s^p\sum_{t\in S_p} x_t^p \sum_{\bar q\in\bar Z_{-p}} x_{\bar q} \left (u_p(s;\bar q) - u_p(t; \bar q) \right )
\]
\end{lem}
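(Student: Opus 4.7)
The plan is to derive the stated identity by a direct algebraic manipulation of the replicator equation, using only (i) the definition of the expected utility $\exptu$ given in Section~\ref{sec: preliminaries}, and (ii) the normalisation $\sum_{t\in S_p}x_t^p=1$. There is no topological or dynamical content here — the claim is purely a rewriting of the right-hand side.

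First I would take the original equation
\[
\dot x_s^p = x_s^p\left(\exptu_p(s; \bar x_{-p}) - \sum_{t\in S_p} x_t^p \exptu_p(t; \bar x_{-p}) \right)
\]
and rewrite the first occurrence of $\exptu_p(s;\bar x_{-p})$ by inserting the identity $\sum_{t\in S_p}x_t^p=1$ as a coefficient. This yields
\[
\exptu_p(s;\bar x_{-p}) - \sum_{t\in S_p} x_t^p \exptu_p(t;\bar x_{-p}) = \sum_{t\in S_p} x_t^p\bigl(\exptu_p(s;\bar x_{-p}) - \exptu_p(t;\bar x_{-p})\bigr).
\]
Next I would expand each expected utility using the definition $\exptu_p(r;\bar x_{-p}) = \sum_{\bar q\in\bar Z_{-p}} u_p(r;\bar q)\prod_{j\neq p}x^j_{\bar q_j} = \sum_{\bar q} x_{\bar q}\, u_p(r;\bar q)$. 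Subtracting term by term inside the sum over $\bar q$ gives
\[
\exptu_p(s;\bar x_{-p}) - \exptu_p(t;\bar x_{-p}) = \sum_{\bar q\in\bar Z_{-p}} x_{\bar q}\bigl(u_p(s;\bar q) - u_p(t;\bar q)\bigr).
\]
Substituting this back into the previous display and multiplying by $x_s^p$ produces exactly the claimed expression.

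There is no real obstacle; the only thing to be careful about is that the shorthand $x_{\bar q}=\prod_{j\neq p} x^j_{\bar q_j}$ (introduced at the top of the appendix) is applied consistently, and that the reindexing does not conflate the dummy variable $t\in S_p$ with the antiprofile index $\bar q\in\bar Z_{-p}$. Since the two sums are over disjoint index sets, they can be freely interchanged, and the derivation is complete in a few lines.
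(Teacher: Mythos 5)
Your proposal is correct and is essentially the paper's own computation: both use only the expansion of $\exptu_p$ over antiprofiles and the normalisation $\sum_{t\in S_p}x^p_t=1$, with the sole (immaterial) difference that you insert the factor $\sum_t x^p_t$ before expanding the expected utilities while the paper expands first and factors afterwards.
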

\begin{proof}
\begin{align*}
    \dot x_s^p &= x_s^p\left(\exptu_p(s; \bar x_{-p}) - \sum_{t\in S_p} x_t^p \exptu_p(t; \bar x_{-p}) \right) \\
     &= x_s^p\left (\sum_{\bar q\in\bar Z_{-p}} x_{\bar q} u_p(s; \bar q) -  \sum_{t\in S_p} x_t^p \sum_{\bar q\in\bar Z_{-p}} x_{\bar q} u_p(t; \bar q)  \right ) \\
    &= x_s^p\sum_{t\in S_p} x_t^p \left(\sum_{\bar q\in\bar Z_{-p}} x_{\bar q} u_p(s; \bar q) -  \sum_{\bar q \in\bar Z_{-p}} x_{\bar q} u_p(t ;\bar q) \right) \\
    &= x_s^p\sum_{t\in S_p} x_t^p \sum_{\bar q\in\bar Z_{-p}} x_{\bar q} \left (u_p(s;\bar q) - u_p(t; \bar q) \right )
\end{align*}
where in the third equality we have used the fact that $\sum_{t\in S_p} x^p_t = 1$.
\end{proof}

\begin{thm}[Theorem~\ref{subgame invariance}]
Let $X$ be the strategy space of a game $u$, and $Y$ be the strategy space of a subgame $u'$ of $u$. The flow $\phi_u|_Y$ of the replicator on $u$ restricted to $Y$ is identical to $\phi_{u'}$, the replicator flow on $u'$.
\end{thm}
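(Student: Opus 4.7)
The plan is to verify two facts, which together with the uniqueness of solutions to the replicator ODE give the result: (i) the subset $Y \subseteq X$ is invariant under $\phi_u$, and (ii) the vector field of the replicator on $u$, restricted to $Y$, agrees pointwise with the vector field of the replicator on $u'$.

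First, for invariance, I would exploit the structural feature that every component of the replicator equation has a factor of $x^p_s$ on the right-hand side. So whenever $x^p_s = 0$, we have $\dot x^p_s = 0$, and hence any coordinate starting at zero remains zero throughout the orbit. Recalling that $Y$ consists exactly of those mixed profiles $x \in X$ with $x^p_s = 0$ for every player $p$ and every strategy $s \in S_p \setminus T_p$, this shows $\phi_u(Y,t) \subseteq Y$ for all $t$; applied to the time-reversed flow it gives backward invariance as well.

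Second, for agreement on $Y$, I would expand the expected-utility terms. Using $x_{\bar q} = \prod_{j \neq p} x^j_{\bar q_j}$, any antiprofile $\bar q$ with some coordinate $\bar q_j \notin T_j$ contributes zero on $Y$, so
\[
\exptu_p(s; \bar x_{-p}) \;=\; \sum_{\bar q \in \bar Z_{-p}} x_{\bar q}\, u_p(s;\bar q) \;=\; \sum_{\bar q \in \bar Y_{-p}} x_{\bar q}\, u'_p(s;\bar q) \;=\; \exptu'_p(s;\bar x_{-p})
\]
for every $s \in T_p$, where the middle equality uses that $u$ and $u'$ coincide on $\prod_i T_i$. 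Likewise, since $x^p_t = 0$ for $t \notin T_p$ on $Y$,
\[
\sum_{t \in S_p} x^p_t\, \exptu_p(t;\bar x_{-p}) \;=\; \sum_{t \in T_p} x^p_t\, \exptu'_p(t;\bar x_{-p}).
\]
Substituting these into the replicator equation yields, for every $s \in T_p$, exactly the replicator equation for $u'$; and for $s \notin T_p$ the equations are trivial (both sides zero) on $Y$ by part (i).

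Finally, because the replicator vector field is smooth (polynomial), the initial-value problem on $Y$ has a unique solution, and this solution coincides with the solution of the replicator IVP on $u'$. Hence $\phi_u|_Y = \phi_{u'}$. There is no real obstacle here: the proof is a bookkeeping exercise that hinges on the fact that the replicator multiplies each coordinate derivative by $x^p_s$, which simultaneously guarantees invariance and ensures that only in-support terms contribute to the dynamics.
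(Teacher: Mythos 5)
Your proposal is correct and follows essentially the same route as the paper's proof: observe that out-of-support coordinates have vanishing derivative, that antiprofiles with a coordinate outside the subgame contribute nothing to the sums, and conclude that the restricted vector field coincides with the replicator vector field of $u'$ (the paper phrases this via its rearranged form of the replicator equation, Lemma~\ref{lem: other replicator form}, but the computation is the same). Your explicit appeal to invariance and uniqueness of ODE solutions makes precise a step the paper leaves implicit; no substantive difference.
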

\begin{proof}
Let $x$ be a mixed profile in the subgame $Y$. Observe from the definition of the replicator that for any player $p$ with strategy $t$ outside of $Y$, $x^p_t = 0$ and so $\dot x^p_t = 0$. Now we show that if $x^p_t > 0$ then $\dot x^p_t$ depends on the payoffs of profiles in $Y$. Then note that $x_{\bar q} = 0$ if $\bar q\not \in \bar Y_{-p}$. Then, for $s\in Y_p$, by Lemma~\ref{lem: other replicator form},
\begin{align*}
    \dot x_s^p
    &= x_s^p\sum_{t\in S_p} x_t^p \sum_{\bar q\in\bar Z_{-p}} x_{\bar q} \left (u_p(s;\bar q) - u_p(t; \bar q) \right ) \\
    &= x_s^p\sum_{t\in Y_p} x_t^p \sum_{\bar q\in\bar Z_{-p}} x_{\bar q} \left (u_p(s;\bar q) - u_p(t; \bar q) \right ) \\
    &= x_s^p\sum_{t\in Y_p} x_t^p \sum_{\bar q\in\bar Y_{-p}} x_{\bar q} \left (u_p(s;\bar q) - u_p(t; \bar q) \right ) \\
\end{align*}
and this is the definition of the replicator on the restricted game $Y$.
\end{proof}

\begin{lem}
If $\chain{x}{y}$ then $\mathcal{A}_x\subseteq \mathcal{A}_y$ and $\mathcal{R}_y\subseteq \mathcal{R}_x$. Conversely, if $x$ is chain recurrent, then $\mathcal{A}_x \subseteq \mathcal{A}_y$ implies $\chain{x}{y}$, and $\mathcal{R}_x\subseteq \mathcal{R}_y$ implies $\chain{y}{x}$. Consequently, if $x$ is chain recurrent, $\chaineq{x}{y}$ if and only if $\mathcal{A}_x = \mathcal{A}_y$ and $\mathcal{R}_x = \mathcal{R}_y$. If $y$ is also chain recurrent then one of these equalities is sufficient.
\end{lem}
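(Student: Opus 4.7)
The plan is to prove the forward direction by exploiting the robustness of attractors under small perturbations, to prove the converse by identifying the pseudo-orbit reachable set from $x$ with an attractor whenever $x$ is chain recurrent, and to assemble the chain equivalence characterisation from these two pieces.

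For the forward direction, fix $A\in\mathcal{A}_x$ with trapping region $B$ satisfying $\phi(\cl B,T_0)\subset B$. Compactness of $\phi(\cl B,T_0)$ together with openness of $B$ yields a uniform $\delta>0$ such that the $\delta$-neighbourhood of $\phi(\cl B,T_0)$ is contained in $B$, so for $\epsilon<\delta$ and $T\geq T_0$ any $(\epsilon,T)$-chain starting in $A\subset B$ is inductively trapped in $B$. Choosing $B$ from a neighbourhood basis of trapping regions of $A$ and letting $\epsilon,T$ vary accordingly, the endpoint $y$ lies in every such $\cl B$, hence in $\bigcap\cl B=A$. This proves $\mathcal{A}_x\subseteq\mathcal{A}_y$. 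The repellor inclusion $\mathcal{R}_y\subseteq\mathcal{R}_x$ follows by time reversal, since a $\phi$-pseudo-orbit from $x$ to $y$ is a $\phi^{-1}$-pseudo-orbit from $y$ to $x$ and every repellor of $\phi$ is an attractor of $\phi^{-1}$.

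For the converse, I would work with the pseudo-orbit reachable set $\Omega^+(x)$ of those $z$ with $\chain{x}{z}$. Chain recurrence of $x$ places $x\in\Omega^+(x)$, the set is closed (chain endpoints are stable under shrinking $\epsilon$), and it is forward-invariant (concatenating a pseudo-orbit with a real orbit remains a pseudo-orbit). The key step is upgrading $\Omega^+(x)$ to an attractor by constructing a trapping region: for small enough $\epsilon$ and large enough $T$, the open $\epsilon$-fattening of the $(\epsilon,T)$-reachable set from $x$ is mapped strictly inside itself by $\phi^T$, because flowing a reachable point leaves it reachable, and any point within $\epsilon$ is reachable via one further $\epsilon$-jump. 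Once $\Omega^+(x)\in\mathcal{A}_x$, the hypothesis $\mathcal{A}_x\subseteq\mathcal{A}_y$ forces $y\in\Omega^+(x)$, which is exactly $\chain{x}{y}$. The implication $\mathcal{R}_x\subseteq\mathcal{R}_y\Rightarrow\chain{y}{x}$ follows by running the same argument on $\phi^{-1}$.

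Combining the two directions gives $\chaineq{x}{y}\Leftrightarrow(\mathcal{A}_x=\mathcal{A}_y\text{ and }\mathcal{R}_x=\mathcal{R}_y)$ for chain recurrent $x$. If $y$ is also chain recurrent, Theorem~\ref{recurrence characterisation} places each of $x,y$ in $A$ or $A^*$ for every attractor $A$, so $A\in\mathcal{A}_x\Leftrightarrow A^*\notin\mathcal{R}_x$ and similarly for $y$; since $A\mapsto A^*$ bijects attractors with repellors, $\mathcal{A}_x=\mathcal{A}_y$ and $\mathcal{R}_x=\mathcal{R}_y$ are equivalent and either one suffices. The main obstacle is the attractor-upgrade step in the converse: producing a genuine trapping region for $\Omega^+(x)$ rather than just verifying it is a closed forward-invariant set. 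The device that makes this work is chain recurrence of $x$, which lets a small perturbation off a pseudo-orbit be reabsorbed by routing back through $x$; if the direct construction proves delicate, a contrapositive fallback is to use the Conley isolating-block constructions from~\cite{alongi2007recurrence} to produce, for each $y\notin\Omega^+(x)$, an attractor separating $x$ from $y$.
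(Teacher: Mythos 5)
Your forward direction is correct and is the standard argument; the paper does not reprove it but simply cites it as well known. Your converse also begins from the right construction---the $\epsilon$-fattening of the $(\epsilon,T)$-reachable set $C_{\epsilon,T}(x)$ is a trapping region, which is exactly what the paper's proof builds---but the step you yourself flag as the key one, ``upgrading $\Omega^+(x)$ to an attractor,'' is a genuine gap: $\Omega^+(x)$ is in general \emph{not} an attractor. What your trapping region $V=N_{\epsilon/2}(C_{\epsilon,T}(x))$ produces is one attractor $A_{\epsilon,T}=\bigcap_{t\geq 0}\phi(V,t)\subseteq C_{\epsilon,T}(x)$ \emph{for each fixed pair} $(\epsilon,T)$; the set $\Omega^+(x)$ is only the intersection of all of these as $\epsilon\to 0$ and $T\to\infty$, and an infinite intersection of attractors need not be an attractor (Lemma~\ref{intersection of attractors} covers finite intersections only). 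Concretely, for a flow on $[0,1]$ with fixed points $\{0\}\cup\{1/n:n\geq 1\}$ and each interval $(1/(n+1),1/n)$ flowing leftward, the point $x=0$ is chain recurrent, the attractors containing it are the intervals $[0,1/n]$, and $\Omega^+(0)=\{0\}=\bigcap_n[0,1/n]$ is not an attractor, since every neighbourhood of $0$ traps infinitely many of the other fixed points.

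The repair is to not pass to the limit: for each fixed $(\epsilon,T)$, show $x\in A_{\epsilon,T}$, deduce $y\in A_{\epsilon,T}\subseteq C_{\epsilon,T}(x)$ from the hypothesis $\mathcal{A}_x\subseteq\mathcal{A}_y$, and only then let $\epsilon$ and $T$ range to conclude $\chain{x}{y}$. This is precisely the paper's route. Note that even this needs an ingredient your sketch omits: $x\in C_{\epsilon,T}(x)$ does not immediately give $x\in\bigcap_{t\geq 0}\phi(V,t)=A_{\epsilon,T}$, since that would require the backward orbit of $x$ to remain in $V$. The paper obtains $x\in A_{\epsilon,T}$ from Theorem~\ref{recurrence characterisation}: a chain recurrent point lies in $A_{\epsilon,T}$ or in its dual repellor, and the dual repellor is disjoint from the trapping region $V$ while $x\in V$. (Your own ``contrapositive fallback''---an attractor separating $x$ from each unreachable $y$---is in fact the correct formalisation of your idea.) With these two adjustments your argument coincides with the paper's; the final bookkeeping, including the equivalence of the two equalities when $y$ is also chain recurrent via Theorem~\ref{recurrence characterisation} and the bijection between attractors and their dual repellors, is correct as you state it.
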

\begin{proof}
The first claim is well-known, with a proof in~\citep{akin1984evolutionary}.

For the second claim, assume $x$ is chain recurrent and $\mathcal{A}_x\subseteq \mathcal{A}_y$. Now fix some $\epsilon > 0$ and $T > 0$, and let $C_{\epsilon,T}(x)$ be the set of points reachable by $(\epsilon,T)$-chains from $x$. Note that $C_{\epsilon,T}(x)$ is open. If $C_{\epsilon,T}(x)$ is all of $X$, then $y\in C_{\epsilon,T}(x)$. Otherwise, consider $V = N_{\epsilon/2}(C_{\epsilon,T}(x))$, that is the set of points within $\epsilon/2$ distance of $C_{\epsilon,T}(x)$. This is open. Now let $z$ be some point within $\epsilon$ of a point $\chi\in C_{\epsilon,T}(x)$. By definition, $\phi(z,t) \in C_{\epsilon,T}(x)$ for all $t\geq T$, as we can obtain a $(\epsilon,T)$-chain to it by adding a step to the chain from $x$ to $\chi$. Thus, we find that $\phi(\cl(V),t) \subseteq \phi(N_\epsilon(C_{\epsilon,T}(x)),t) \subseteq C_{\epsilon,T}(x)\subset V$. Hence $V$ is a trapping region for some attractor $A\subseteq C_{\epsilon,T}(x) \subseteq V$. Let $A^*$ be the dual repellor of $A$. By Theorem~\ref{recurrence characterisation}, $x\in A\cup A^*$, and since $x\in V$ and $A^*\subseteq X\setminus V$ we have $x\in A$. But then $y\in A$, since $y$ is contained in all attractors containing $x$, and so $y\in A \subseteq C_{\epsilon,T}(x)$. Since $\epsilon$ and $T$ were arbitrary, $\chain{x}{y}$. Now observing that in the time-reversed flow $\phi^{-1}$, repellors become attractors and vice versa, and pseudo-orbits $\chain{x}{y}$ become pseudo-orbits $\chain{y}{x}$, establishing the claim for repellors.

If $\chaineq{x}{y}$, then $\chain{x}{y}$ and $\chain{y}{x}$, so $\mathcal{A}_x = \mathcal{A}_y$ and $\mathcal{R}_y = \mathcal{R}_x$ by the first claim. By the second claim, if $x$ is chain recurrent and $\mathcal{A}_x = \mathcal{A}_y$ and $\mathcal{R}_y = \mathcal{R}_x$ then $\chaineq{x}{y}$, and so $y$ is necessarily also chain recurrent. If we know in advance that $y$ is chain recurrent, then one of the equalities is sufficient, because by Theorem~\ref{recurrence characterisation} if two chain recurrent points are contained in all the same attractors then they are also contained in all the same repellors.
\end{proof}

We will need the following result.

\begin{lem}[\cite{kalies2021lattice}] \label{intersection of attractors}
If $\phi$ is a flow, then any finite non-empty intersection of attractors is an attractor.
\end{lem}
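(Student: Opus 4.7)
The plan is to prove the two-attractor case and then extend to arbitrary finite intersections by induction. So suppose $A_1$ and $A_2$ are attractors with $A_1 \cap A_2 \neq \emptyset$, and let $B_1$ and $B_2$ be associated trapping regions so that $\phi(\cl(B_i), t) \subset B_i$ for $t \geq T_i$ and $A_i = \bigcap_{t\geq 0}\phi(B_i, t)$. The candidate trapping region for $A_1 \cap A_2$ is the obvious one, $B := B_1 \cap B_2$.

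First I would verify that $B$ is a trapping region. It is open as the intersection of open sets, and forward-invariant as an intersection of forward-invariant sets. It is non-empty because $A_1 \cap A_2 \subseteq B_1 \cap B_2$. For the contraction property I would use $\cl(B_1 \cap B_2) \subseteq \cl(B_1) \cap \cl(B_2)$, from which
\[
\phi(\cl(B), t) \subseteq \phi(\cl(B_1), t) \cap \phi(\cl(B_2), t) \subset B_1 \cap B_2 = B
\]
for every $t \geq \max(T_1, T_2)$.

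Second, I would show $\bigcap_{t\geq 0}\phi(B, t) = A_1 \cap A_2$. The inclusion $\subseteq$ is immediate since $\phi(B, t)\subseteq \phi(B_i, t)$ for each $i$. For the reverse inclusion I would use the invertibility of the flow: if $x \in A_1 \cap A_2$ then for every $t \geq 0$ the unique preimage $\phi(x, -t)$ must lie in both $B_1$ (because $x \in \phi(B_1, t)$) and $B_2$ (because $x \in \phi(B_2, t)$), whence $\phi(x, -t) \in B$ and so $x \in \phi(B, t)$. Since $A_1 \cap A_2$ is automatically compact (intersection of compact sets) and invariant (intersection of invariant sets), the characterisation recalled after Definition~\ref{def: attractor} now certifies $A_1 \cap A_2$ as an attractor with trapping region $B$.

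For arbitrary finite intersections $A_1 \cap \dots \cap A_n$, I would induct: assuming $A_1 \cap \dots \cap A_{n-1}$ is an attractor (with some trapping region already constructed), apply the two-attractor result to this set and $A_n$, using the hypothesis that the total intersection is non-empty to ensure the inductive intersection is non-empty at each stage. The only slightly delicate point in the argument is the reverse inclusion in the second step, where the invertibility of the flow is indispensable: without it one only obtains that preimages exist under each $\phi(\cdot, t)|_{B_i}$ separately, with no reason for them to coincide. Everything else reduces to bookkeeping with the definition of a trapping region.
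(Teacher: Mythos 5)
Your proof is correct. Note that the paper does not prove this lemma itself but imports it from \cite{kalies2021lattice}; your argument via the intersection of trapping regions is essentially the standard one underlying that reference (the lattice of attracting neighbourhoods is closed under finite intersection, and the attractor map respects this). Each step checks out: $B_1\cap B_2$ is open, forward-invariant, non-empty (it contains $A_1\cap A_2$), and contracts its closure for $t\geq\max(T_1,T_2)$; and your use of invertibility of the flow to get $A_1\cap A_2\subseteq\phi(B_1\cap B_2,t)$ is sound, though you could shortcut it by noting that $A_1\cap A_2$ is invariant and contained in $B_1\cap B_2$, so $A_1\cap A_2=\phi(A_1\cap A_2,t)\subseteq\phi(B_1\cap B_2,t)$ for all $t$.
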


\begin{thm}[Theorem~\ref{scc theorem}]
Let $x\in X$ be a chain recurrent point under the replicator. Then there exists a pure profile $y$ and pseudo-orbit $\chain{x}{y}$ where $y$ is in a sink connected component $H$ contained in a sink chain component $[H]$.
\end{thm}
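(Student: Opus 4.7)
The plan follows the two-step outline given in the paper. First, I would show that every chain recurrent $x$ has a pseudo-orbit to some pure profile; second, I would use the response-graph structure to push this pure profile into a sink connected component whose chain component is minimal.

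For the first step, argue by contradiction: suppose no pure profile $y$ satisfies $\mathcal{A}_x \subseteq \mathcal{A}_y$. Then for each $z$ in the finite set $Z$ of pure profiles, one can pick an attractor $A_z \in \mathcal{A}_x$ with $z \notin A_z$. The intersection $A := \bigcap_{z \in Z} A_z$ is a finite intersection of attractors, hence itself an attractor by Lemma~\ref{intersection of attractors}; it contains $x$ but excludes every pure profile, contradicting Lemma~\ref{attractors contain nodes}. Hence some pure profile $y_0$ lies in every attractor of $\mathcal{A}_x$, so $\mathcal{A}_x \subseteq \mathcal{A}_{y_0}$, and chain recurrence of $x$ together with Lemma~\ref{chains in attractors} yields $\chain{x}{y_0}$.

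For the second step, since the response graph is finite, from $y_0$ one can follow arcs to reach a pure profile in some sink connected component, and Lemma~\ref{graph reachable} upgrades this graph path to a pseudo-orbit. Let $\mathcal{S}$ be the resulting finite nonempty collection of sink connected components reachable from $x$ by pseudo-orbits, partially ordered by pseudo-orbit reachability (well-defined on sink connected components because they are strongly connected, hence internally chain equivalent by Lemma~\ref{graph reachable}). Choose a minimal $H^* \in \mathcal{S}$; I claim $[H^*]$ is a sink chain component. Suppose otherwise: some chain component $C$ admits a pseudo-orbit from $[H^*]$ to $C$ but not back. Any point in $C$ is chain recurrent, so Step 1 produces a pure profile reachable from $C$, and Step 2 then routes it into some sink connected component $H'$, giving $H' \in \mathcal{S}$ by transitivity along with a pseudo-orbit from $H^*$ to $H'$. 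A pseudo-orbit from $H'$ back to $H^*$ would compose with the one from $C$ to $H'$ to produce one from $C$ to $H^*$, contradicting our assumption. Hence $H' < H^*$ in $\mathcal{S}$, contradicting minimality. Any $y \in H^*$ with $\chain{x}{y}$ then fulfils the theorem.

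The main obstacle is that $\mathcal{A}_x$ can in principle be infinite, so Lemma~\ref{intersection of attractors} does not apply to the full family at once; the key device is to index the selected attractors by the finite set $Z$, which keeps the intersection finite while still ruling out every pure profile. A secondary subtlety is passing from a pure profile with a pseudo-orbit from $x$ to one in an \emph{actual} sink chain component, which is where the finite poset $\mathcal{S}$ and the iteration between Steps 1 and 2 do the real work.
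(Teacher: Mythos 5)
Your proof is correct and follows essentially the same route as the paper's: the finite-intersection-of-attractors contradiction in your Step 1 is exactly the paper's argument (the paper indexes the chosen attractors by sink connected components rather than by pure profiles, but the mechanism via Lemma~\ref{intersection of attractors} and Lemma~\ref{attractors contain nodes} is identical), and both proofs then extract a minimal element of a finite preorder on sink connected components. The only cosmetic difference is that the paper orders sink connected components globally by attractor containment ($K \preceq H$ iff $\mathcal{A}_K \supseteq \mathcal{A}_H$) and reads off the sink chain components as the chain components of the minimal elements directly via Lemma~\ref{chains in attractors}, whereas you order by pseudo-orbit reachability from $x$ and iterate Steps 1--2 to refute non-minimality; since Lemma~\ref{chains in attractors} makes these two orders coincide on chain recurrent sets, the arguments are interchangeable.
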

\begin{proof}
First, we define a preorder on the set of sink connected components of a game, defined by the attractors of the replicator. If $K$ and $H$ are sink connected components, we say $K \preceq H$ if $\mathcal{A}_K \supseteq \mathcal{A}_H$. This is a preorder because it inherits reflexivity and transitivity from the sets $\mathcal{A}_K$ and $\mathcal{A}_H$. By finiteness, this order has at least one minimal element, and we denote the minimal elements by $M_1,\dots,M_m$. We will show that the sink chain components are exactly the chain components $[M_1],\dots,[M_m]$ (which are not necessarily distinct).

(\emph{Claim:} Every chain recurrent point has a pseudo-orbit to some $M_i$.)

Let $x$ be chain recurrent. Now define 
\[
\mathcal{M}_x := \bigcap_{A \in \mathcal{A}_{\{x\}}} \{\text{sink connected components contained in }A\}
\]
Recall that each such $A$ contains a sink connected component (Lemma~\ref{attractors contain nodes}). Suppose for contradiction that $\mathcal{M}_x$ is empty. Then for each sink connected component $H_1,\dots,H_n$ in the game there exist attractors $B_1,\dots,B_n$, each containing $x$, where $B_i$ does not contain $H_i$. By Lemma~\ref{intersection of attractors}, $\bigcap_{i=1}^n B_i$ is an attractor, which by Lemma~\ref{attractors contain nodes} contains a sink connected component $H_i$. This is a contradiction, because $H_i$ is not contained in $B_i$, by construction. Hence $\mathcal{M}_x$ is not empty. Let $K$ be a sink connected component in $\mathcal{M}_x$. By definition of $\mathcal{M}_x$, $\mathcal{A}_{\{x\}}\subseteq \mathcal{A}_K$.

There exists some sink connected component $M_i$ with $M_i \preceq K$ and so $\mathcal{A}_{\{x\}}\subseteq \mathcal{A}_K \subseteq \mathcal{A}_{M_i}$, and by Lemma~\ref{chains in attractors} there is a pseudo-orbit $\chain{x}{y}$ from $x$ to every point $y$ in $M_i$.

(\emph{Claim:} The $[M_i]$s are sink chain components.)

Let $z$ be some chain recurrent point, and suppose there is a pseudo-orbit $\chain{x}{z}$, with $x\in [M_i]$. By the above argument there is some $M_j$ with $\chain{z}{y}$, $y\in M_j$. By transitivity there is a pseudo-orbit $\chain{x}{y}$, which by Lemma~\ref{chains in attractors} implies that $\mathcal{A}_{M_i}\subseteq \mathcal{A}_{M_j}$, but since each $M_i$ is minimal in this order we must have $\mathcal{A}_{M_i} = \mathcal{A}_{M_j}$. Lemma~\ref{chains in attractors} establishes that $[M_i] = [M_j]$. Hence $z\in [M_i]$ and so $[M_i]$ is a sink chain component.
\end{proof}

\begin{thm}[Theorem~\ref{content containment}]
Let $H$ be a connected component of the response graph. Then the content of $H$ is contained in $[H]$, the chain component of the replicator containing $H$.
\end{thm}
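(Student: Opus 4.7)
The plan is to proceed by induction on the size of the support-subgame. For any mixed profile $x$, let $Y_x$ denote the subgame in which player $i$'s strategy set is $\supp(x^i)$, so that the pure profiles of $Y_x$ are exactly the support of $x$ and $x$ lies in the relative interior of the strategy space $X_{Y_x}$. The induction is on $\sum_i|\supp(x^i)|$. For the base case, $x$ itself is a pure profile in $H$, hence trivially in $[H]$. In the inductive step, $\partial X_{Y_x}$ consists of mixed profiles with strictly smaller support still drawn from $H$, so by the inductive hypothesis $\partial X_{Y_x}\subseteq[H]$; in particular every point of $\partial X_{Y_x}$ is chain equivalent to any fixed node $v\in H$, and so shares the same attractor/repellor sets as $v$ by Lemma~\ref{chains in attractors}.

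The key technical lemma I need is that $X_{Y_x}$ is invariant under $\phi$ (support is preserved by the replicator), so that for any attractor $A$ of the full flow, $A\cap X_{Y_x}$ is either empty or an attractor of the restricted flow $\phi|_{X_{Y_x}}$, which by Theorem~\ref{subgame invariance} is exactly the replicator $\phi_{Y_x}$ on the subgame $Y_x$. This holds because if $B$ is a trapping region for $A$, then $B\cap X_{Y_x}$ is open in $X_{Y_x}$, and $\phi(\cl(B)\cap X_{Y_x},t)\subseteq B\cap X_{Y_x}$ for $t$ large by combining $\phi(\cl(B),t)\subseteq B$ with invariance of $X_{Y_x}$. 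The analogous statement holds for repellors via the time-reversed flow.

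With this in hand, I establish that $\mathcal{A}_x=\mathcal{A}_v$ for any $v\in H$. For the forward direction, if $v\in A$, then $H\subseteq A$ by strong connectedness of $H$ combined with Lemmas~\ref{graph reachable} and~\ref{chains in attractors}; by induction and chain equivalence, $\partial X_{Y_x}\subseteq A$ as well. If $x\notin A$, then $A\cap X_{Y_x}$ is a proper attractor of $\phi_{Y_x}$ containing the entire boundary, so its dual repellor lies wholly in the interior of $X_{Y_x}$, contradicting Theorem~\ref{no interior attractors} applied to repellors in the subgame. For the reverse direction, if $x\in A$ then $A\cap X_{Y_x}$ is an attractor of $\phi_{Y_x}$ containing an interior point, so by Theorem~\ref{no interior attractors} it must meet $\partial X_{Y_x}\subseteq[H]$, and any such meeting point $z$ satisfies $\mathcal{A}_z=\mathcal{A}_v$, giving $v\in A$. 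The identical argument, run on the time-reversed flow, yields $\mathcal{R}_x=\mathcal{R}_v$.

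Finally, since $v$ is chain recurrent, Theorem~\ref{recurrence characterisation} together with $\mathcal{A}_x=\mathcal{A}_v$ and $\mathcal{R}_x=\mathcal{R}_v$ forces $x$ to be chain recurrent too, and then Lemma~\ref{chains in attractors} yields $\chaineq{x}{v}$, so $x\in[H]$. The main obstacle I anticipate is the careful verification that the restricted set $A\cap X_{Y_x}$ really is an attractor of $\phi_{Y_x}$ (as opposed to merely a closed invariant set), since this is what lets us invoke Theorem~\ref{no interior attractors} inside the subgame; everything else is bookkeeping around Lemma~\ref{chains in attractors} and the induction.
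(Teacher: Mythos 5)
Your proof is correct and follows essentially the same route as the paper's: induction on support-subgames, with the inductive hypothesis placing the entire boundary of the subgame in $[H]$, and Theorem~\ref{no interior attractors} (transferred to the subgame flow via Theorem~\ref{subgame invariance}) ruling out any proper attractor containing that boundary because its dual repellor would lie in the interior. Your explicit verification that $A\cap X_{Y_x}$ is an attractor of the restricted flow, and the bookkeeping via $\mathcal{A}_x=\mathcal{A}_v$ and $\mathcal{R}_x=\mathcal{R}_v$, merely make precise steps the paper's proof leaves implicit; the underlying argument is the same.
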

\begin{proof}
We show that if all of the pure profiles in a game $Y$ are in the same chain component $K$, then all profiles in the game are in $K$. We will use induction, by Theorem~\ref{subgame invariance}. First, observe that the base case where $Y$ is a pure profile is trivial.

Now suppose for induction that all mixed profiles in all proper subgames of $Y$ are contained in $K$. The union of all proper subgames is the boundary of $Y$, so the boundary of $Y$ is contained in $K$. Suppose there is an attractor $A$ in $Y$. By Theorem~\ref{no interior attractors}, $A$ intersects the boundary, but then by Lemma~\ref{chains in attractors}, $A$ must contain all of the boundary, since all points on the boundary are chain equivalent. But then the dual repellor $A^*$ is contained in the interior of $Y$, which is impossible by Theorem~\ref{no interior attractors}. We conclude that there are no attractors or repellors in $Y$, and so all points are chain equivalent by Lemma~\ref{chains in attractors}.

Finally, we know by Lemma~\ref{graph reachable} that all profiles in a connected component $H$ are chain recurrent and in the same chain component $[H]$. By the above argument, in any subgame where all pure profiles are in $H$, all mixed profiles are chain equivalent, and so in $[H]$. The content of $H$ is the union of all of these subgames.
\end{proof}

\begin{lem}[Lemma~\ref{attracting subgames}]
If $Y$ is an attracting subgame, then the set of mixed profiles in $Y$ is an attractor under the replicator.
\end{lem}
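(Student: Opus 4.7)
The plan is to show that $\widetilde Y$, the set of mixed profiles with support contained in $Y$, is an attractor by constructing a Lyapunov-style function whose sublevel sets furnish a trapping neighbourhood. Note first that by Theorem~\ref{subgame invariance}, $\widetilde Y$ is flow-invariant under the replicator, and it is compact and non-empty, being a product of simplices over the sets $Y_p$. So the remaining task is to exhibit a neighbourhood on which the mass that players place on strategies outside $Y$ decays exponentially in time.

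First I would extract the decisive inequality from the attracting hypothesis. Since $Y$ is attracting in the response graph, for every player $p$, every $t \in Y_p$, every $s \in S_p \setminus Y_p$, and every antiprofile $\bar q \in \bar Y_{-p}$, there is no arc from $(t;\bar q)\in Y$ to $(s;\bar q)\notin Y$, so $u_p(t;\bar q) > u_p(s;\bar q)$ strictly. Taking the weighted average over $\bar q$ with the weights $x_{\bar q}$ for any $x \in \widetilde Y$ yields $\exptu_p(t;\bar x_{-p}) > \exptu_p(s;\bar x_{-p})$ for every $t \in Y_p$ and every $s \notin Y_p$. Since $\bar u_p(x) := \sum_{t\in S_p} x^p_t\, \exptu_p(t;\bar x_{-p})$ reduces on $\widetilde Y$ to a convex combination over $t \in Y_p$ only, the gap $\exptu_p(s;\bar x_{-p}) - \bar u_p(x)$ is strictly negative on $\widetilde Y$. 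Compactness of $\widetilde Y$ and continuity then produce a constant $\delta > 0$ and an open neighbourhood $U$ of $\widetilde Y$ on which this gap is uniformly at most $-\delta$ for every $p$ and every $s \notin Y_p$.

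Next I would assemble the Lyapunov function. Set $f_p(x) := \sum_{s \notin Y_p} x^p_s$ and $F(x) := \sum_p f_p(x)$, so $F \geq 0$ with $F(x) = 0$ exactly on $\widetilde Y$. For $s \notin Y_p$, the replicator equation gives $\dot x^p_s = x^p_s(\exptu_p(s;\bar x_{-p}) - \bar u_p(x))$, hence $\dot x^p_s \leq -\delta\, x^p_s$ on $U$, and summing over $p$ and $s \notin Y_p$ yields $\dot F \leq -\delta F$ throughout $U$. Choose $\epsilon > 0$ small enough that $\{F \leq \epsilon\} \subset U$ (possible by continuity of $F$ and compactness of $\widetilde Y$); then $B := \{F < \epsilon\}$ is an open, forward-invariant neighbourhood of $\widetilde Y$, and Grönwall gives $F(\phi(x,t)) \leq \epsilon e^{-\delta t}$ for $x \in B$. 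Since sublevel sets of $F$ shrink to $\widetilde Y$, this forces $\sup_{x \in B}\inf_{y\in\widetilde Y}\dist(\phi(x,t),y) \to 0$, which is precisely the attractor criterion of Definition~\ref{def: attractor}.

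The main obstacle is really the uniformity step: upgrading the pointwise strict payoff inequality that holds on $\widetilde Y$ to a uniform exponential decay bound $\dot F \leq -\delta F$ on an open neighbourhood. Once that uniform $\delta$ is in hand, both the Grönwall estimate and the verification of the attractor definition are routine.
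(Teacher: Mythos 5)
Your proof is correct, and it rests on the same decisive fact as the paper's: because there are no arcs leaving $Y$ in the response graph, $u_p(t;\bar q) > u_p(s;\bar q)$ \emph{strictly} for every $t\in Y_p$, $s\in S_p\setminus Y_p$ and $\bar q\in \bar Y_{-p}$ (in the paper this appears as the negativity of the constants $\alpha^p_{s,t}$). Both arguments then follow the same skeleton --- produce a neighbourhood of the content of $Y$ on which every coordinate $x^p_s$ with $s\notin Y_p$ decreases, deduce forward-invariance, and verify Definition~\ref{def: attractor} --- but the execution differs genuinely. The paper works with the explicit $\infty$-norm neighbourhood $\mathcal{Y}_M$, splits the replicator sum into its $Y$- and non-$Y$-portions, and controls each with constants $\alpha,\beta,\gamma$ to obtain a bound on $\dot x^p_s$ that is linear in $M$, then chooses $M$ small enough to make it negative. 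You instead note that the drift $\exptu_p(s;\bar x_{-p})-\sum_{t} x^p_t\,\exptu_p(t;\bar x_{-p})$ is continuous and strictly negative on the compact set $\widetilde Y$, so a uniform bound $-\delta$ persists on some open neighbourhood; this soft compactness step replaces all of the paper's bookkeeping. You then aggregate into the single Lyapunov function $F=\sum_p\sum_{s\notin Y_p}x^p_s$ with $\dot F\le -\delta F$ and invoke Gr\"onwall. What your route buys is a cleaner endgame: the exponential bound $F(\phi(x,t))\le \epsilon e^{-\delta t}$ delivers the uniform convergence required by Definition~\ref{def: attractor} directly, whereas the paper passes from strict coordinatewise decrease to $\phi^t(x)^p_s\to 0$, a step that deserves (and in the paper receives only implicitly) an extra word. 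What it costs is constructiveness: the paper's $M$ is in principle computable from the payoffs, while your $\delta$ and the neighbourhood $U$ come from an existence argument. The only points worth spelling out in a final write-up are the two routine compactness facts you flag yourself: that $\{F\le\epsilon\}\subset U$ for $\epsilon$ small, and that small $F$ forces small distance to $\widetilde Y$; both follow from continuity of $F$ and compactness of $X$.
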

\begin{proof}
Given a number $0<M<1$ we define $\mathcal{Y}_M$ to be the $M$-neighbourhood of $Y$ in the infinity norm; that is, the set of mixed profiles $x$ where for any player $p$ and strategy $s\in S_p\setminus Y_p$, $x^p_s < M$.

We know that the points in $Y$ form a compact non-empty invariant set, by Theorem~\ref{subgame invariance}, and we assume that $Y$ is not the whole game.

(\emph{Claim:} there exists an $M > 0$ such that in $\mathcal{Y}_M$, $\dot x^p_r < 0$ for any strategy $s\in S_p\setminus Y_p$)

Fix a player $p$, and let $s\in S_p \setminus Y_p$. By Lemma~\ref{lem: other replicator form},
\[
\dot x^p_s = x_s^p\sum_{t\in S_p} x_t^p \sum_{\bar q\in\bar Z_{-p}} x_{\bar q} \left (u_p(s;\bar q) - u_p(t; \bar q) \right )
\]
Now, if $t\in S_p$, we define $\alpha^p_{s,t} = \max_{\bar q\in\bar Y_{-p}}( u_p(s;\bar q) - u_p(t;\bar q))$ and $\beta^p_{s,t} = \max_{\bar q\in\bar Z_{-p}\setminus \bar Y_{-p}} (u_p(s;\bar q) - u_p(t;\bar q))$. As $Y$ is attracting and $t\in S_p$, $\alpha^p_{s,t}$ must be negative. Finally, define $\gamma^p_s = \max_{t\in S_p\setminus Y_p} \max_{\bar q\in \bar Z_{-p}} ( u_p(s;\bar q) - u_p(t;\bar q)) $.

Then
\begin{align*}
    \dot x^p_s &= x_s^p\sum_{\bar q \in\bar Z_{-p}} x_{\bar q} \sum_{t\in S_p} x_t^p  \left (u_p(s; \bar q)_p - u_p(t;\bar q)\right ) \\
    &= x_s^p\sum_{\bar q \in\bar Z_{-p}} x_{\bar q} \left ( \sum_{t\in Y_p} x_t^p \left (u_p(s; \bar q) - u_p(t; \bar q)\right ) + \sum_{t\in S_p\setminus Y_p} x_t^p \left (u_p(s; \bar q) - u_p(t; \bar q)\right ) \right) \\
    &\leq x_s^p\left ( \sum_{t\in Y_p} x_t^p \left (\alpha^p_{s,t} \sum_{\bar q\in \bar Y_{-p}}  x_{\bar q} + \beta^p_{s,t} \sum_{\bar q\in \bar Z_{-p}\setminus \bar Y_{-p}}  x_{\bar q} \right) + \gamma^p_s I(\gamma^p_s > 0)\sum_{t\in S_p\setminus Y_p} x_t^p \right )
\end{align*}
where $I(\cdot)$ is the indicator function. 
Now observe that every antiprofile $\bar q\in \bar Z_{-p}\setminus \bar Y_{-p}$ contains at least one strategy not in $Y$, and so for $x\in \mathcal{Y}_M$, $x_{\bar q} \leq M$. Thus $\sum_{\bar q\in \bar Z_{-p}\setminus \bar Y_{-p}} x_{\bar q} \leq M Q$, where $Q$ is the cardinality of $Z_{-p}\setminus \bar Y_{-p}$. Hence $\sum_{\bar q\in \bar Y_{-p}} x_{\bar q} \geq 1-MQ$. For $M < 1/Q$, $1-MQ > 0$. Given $\alpha^p_{s,t} < 0$, we have $\alpha^p_{s,t} \sum_{\bar q\in \bar Y_{-p}} x_{\bar q} \leq \alpha^p_{s,t} (1-MQ)$ and $\beta^p_{s,t} I(\beta^p_{s,t} > 0) \sum_{\bar q\in \bar Z_{-p}\setminus \bar Y_{-p}}  x_{\bar q} \leq MQ$. Also, for $x\in \mathcal{Y}_M$ we have $\sum_{t\in S_p\setminus Y_p} x_t^p \leq MN$. Hence
\begin{align*}
    \dot x^p_s &\leq x_s^p\left ( \sum_{t\in Y_p} x_t^p \left (\alpha^p_{s,t} (1-MQ) + \beta^p_{s,t} MQ \right) + \gamma^p_s I(\gamma^p_s > 0)NM \right ) \\
    &\leq x_s^p\left ( \sum_{t\in Y_p} x_t^p \alpha^p_{s,t} - MQ  \sum_{t\in Y_p} \alpha^p_{s,t} + MQ \sum_{t\in Y_p} \beta^p_{s,t}  + \gamma^p_s I(\gamma^p_s > 0)NM \right)
\end{align*}
Finally, let $\alpha^p_s = \max_{t\in Y_p} \alpha^p_{s,t}$. Then $\sum_{t\in Y_p} x_t^p \alpha^p_{s,t} \leq \alpha^p_s \sum_{t\in Y_p} x^p_t \leq \alpha_s^p(1 - M L_p)$, where $L_p = |S_p\setminus Y_p|$, and recalling that $\alpha_s^p < 0$. Thus we obtain the bound:
\begin{align*}
    \dot x^p_s &\leq x_s^p\left ( \alpha^p_s(1-ML_p) - MQ  \sum_{t\in Y_p} \alpha^p_{s,t} + MQ \sum_{t\in Y_p} \beta^p_{s,t}  + \gamma^p_s I(\gamma^p_s > 0)NM \right) \\
    &\leq x_s^p\left ( \alpha^p_s +M\left (-L_p\alpha^p_s - Q  \sum_{t\in Y_p} \alpha^p_{s,t} + Q \sum_{t\in Y_p} \beta^p_{s,t}  + \gamma^p_s I(\gamma^p_s > 0)N \right )\right) \\
\end{align*}
The sign of this equation is determined by the sign of the bracketed term, which is a linear equation in $M$ that is independent of $x$. Given $\alpha_s^p < 0$, there is a positive value $M^p_s$ ($< 1/Q)$) such that for any $M \leq M^p_s$ we have $\dot x^p_s$ always negative. Finally, define $M := \min_p \min_{s\in S_p\setminus Y_p} M^p_s$, which is positive as there are only finitely many strategies. This establishes the claim.

Since $x^p_s$ is strictly decreasing inside $\mathcal{Y}_M$, this set is forward-invariant and so $\phi^t(x)^p_s \to 0$ as $t\to\infty$. The result then follows from
\begin{align*}
    \lim_{t\to\infty} \sup_{x\in \mathcal{Y}_M} \min_{y\in Y} \dist(\phi^t(x),y)
    &= \lim_{t\to\infty} \sup_{x\in \mathcal{Y}_M} \min_{y\in Y} \|\phi^t(x) - y\|_\infty \\
    &= \lim_{t\to\infty} \sup_{x\in \mathcal{Y}_M} \min_{y\in Y} \max_p \max_{s\in S_p} |\phi^t(x)^p_s - y^p_s | \\
    &= \lim_{t\to\infty} \sup_{x\in \mathcal{Y}_M}  \max_p \max_{s\in S_p} \min_{y\in Y} |\phi^t(x)^p_s - y^p_s |
\end{align*}
The value $\min_{y\in Y} |\phi^t(x)^p_s - y^p_s |$ is zero if the strategy $s\in S_p$ is inside $Y$. Consequently, since $\phi^t(x)$ is not in $Y$, the strategy $s$ which maximises \[ \max_p \max_{s\in S_p} \min_{y\in Y} |\phi^t(x)^p_s - y^p_s | \] must be outside $Y$. Thus $\phi^t(x)^p_s \to 0$ as $t\to \infty$, for any $x\in \mathcal{Y}_M$, and so
\[
\lim_{t\to\infty} \sup_{x\in \mathcal{Y}_M}  \max_p \max_{s\in S_p} \min_{y\in Y} |\phi^t(x)^p_s - y^p_s |  = 0
\]
\end{proof}

\begin{lem} \label{2xn connected}
The sink connected components of strict $2\times n$ games are subgames. If the game is preference-zero-sum game without dominated strategies, then the response graph is strongly connected.
\end{lem}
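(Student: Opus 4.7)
My plan is to prove the two parts of the lemma separately, working implicitly in the no-dominated-strategies regime in force throughout Section~\ref{sec: applications}. For the first part, fix a sink connected component $H$ and define $T_i = \{j \in S_2 : (i,j) \in H\}$ and $C_i = \{j : u_1(i,j) > u_1(3-i,j)\}$, so $S_2 = C_1 \cup C_2$ disjointly by strictness. Write $\succ_i$ for player 2's row-$i$ preference order on $S_2$. A row-$i$ arc from $(i,j)$ reaches $(i,j')$ for every $j' \succ_i j$, so attractingness of $H$ forces each $T_i$ to be an up-set in $\succ_i$; and the column arc in column $j$ points into row $i$ iff $j \in C_i$.

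The first part splits into cases. If $T_2 = \emptyset$, the $\succ_1$-max $j_1^*$ lies in $T_1$, its column arc cannot exit $H$, so $j_1^* \in C_1$, which makes $(1,j_1^*)$ a pure Nash equilibrium; minimality of $H$ then gives $H = \{(1,j_1^*)\}$, a singleton subgame (the case $T_1 = \emptyset$ is symmetric). If both $T_1, T_2 \neq \emptyset$, the same minimality argument applied to $j_1^*$ and $j_2^*$ yields $j_1^* \in T_2 \cap C_2$ and $j_2^* \in T_1 \cap C_1$ (otherwise one of these is a pure NE strictly inside $H$). The crux is then $T_1 = T_2$. Suppose for contradiction $j \in T_1 \setminus T_2$. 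Then $(2,j) \notin H$, so the column arc at $j$ must be incoming, forcing $j \in C_1$; and since $T_2$ is an up-set in $\succ_2$ missing $j$, we have $j \prec_2 j_1^*$, while trivially $j \prec_1 j_1^*$. Hence $j_1^*$ strictly beats $j$ for player 2 in both rows, dominating $j$---contradicting the hypothesis. A symmetric argument with $j_2^*$ handles $T_2 \setminus T_1$, so $T_1 = T_2$ and $H$ is the subgame $\{1,2\} \times T_1$.

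For the second part, let $G$ be preference-zero-sum with no dominated strategies and pick any zero-sum $G'$ sharing its response graph (so also free of dominated strategies, since dominance is graph-theoretic). Writing $A$ for the player-$1$ payoff matrix of $G'$ and sorting columns so $A_{1,\cdot}$ is strictly increasing, the no-dominated-column hypothesis amounts to the columns of $A$ forming an antichain in the pointwise order, forcing $A_{2,\cdot}$ strictly decreasing. Then $A_{1,j} - A_{2,j}$ is strictly increasing in $j$, so $C_2 = \{1,\ldots,m\}$ and $C_1 = \{m+1,\ldots,n\}$ for some $1 \leq m \leq n-1$ (the endpoints excluded by no dominated rows). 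Player $2$ prefers column $1$ in row $1$ and column $n$ in row $2$, so the response graph contains the cycle $(1,1) \to (2,1) \to (2,n) \to (1,n) \to (1,1)$---using the column arcs at $1 \in C_2$ and $n \in C_1$ together with the extremal row arcs---and every other profile $(i,j)$ connects to this cycle in both directions via a single row-$i$ arc. Hence the response graph is strongly connected.

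The main obstacle is the crux of Part~1 Case~2: identifying $j_1^*$ as the column dominating $j$ by combining the up-set structure of $T_2$ with the extremal nature of $j_1^*$; the rest is bookkeeping. I also flag that the first sentence of the lemma as printed omits the no-dominated-strategies hypothesis under which this proof operates and under which the subgame conclusion genuinely holds---a sink connected component containing profiles built from a dominated column can fail to be a subgame.
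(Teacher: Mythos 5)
Your proof is correct (under the no-dominated-strategies hypothesis, on which more below) and takes a genuinely different route from the paper's on both halves. For the first claim, the paper exploits the fact that with no dominated columns player 2's preference order in row $B$ is the exact reverse of its order in row $A$, labels the columns $s_1,\dots,s_n$ accordingly, and chases paths to show that once $(A,s_i)$ and $(B,s_j)$ both lie in the component so do $(A,s_j)$ and $(B,s_i)$. You never need the full order reversal: you use only that each $T_i$ is an up-set, that minimality of $H$ excludes any pure Nash equilibrium properly contained in it, and that a column present in one row of $H$ but absent from the other would be strictly dominated by the extremal column $j_1^*$ or $j_2^*$. This is cleaner and isolates exactly where no-dominance enters. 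For the second claim, the paper argues directly on the graph of $G$: either the two extremal column arcs close a Hamiltonian cycle, or a $2\times 2$ Coordination subgraph appears, which is impossible in a preference-zero-sum game by the cited Corollary~4.7 of \cite{biggar2022twoplayer}. You instead pass to a zero-sum representative $G'$, read the monotone structure off its payoff matrix, and exhibit the cycle explicitly. Both work; the paper's version avoids the bookkeeping you would need to spell out, namely that the isomorphism of response graphs transports the $2\times n$ shape and the absence of dominated strategies to $G'$ (routine for $n>2$ since rows and columns are recoverable from the comparability structure, and symmetric for $n=2$), and that strong connectivity transfers back to $G$.

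Your closing flag is correct and identifies a genuine gap in the paper's own proof, which opens with ``we can assume there are no dominated strategies, because the sink connected components are always contained within the subgame surviving iterated dominance.'' That containment holds for sink chain components of the replicator but not for sink connected components of the response graph. A concrete strict $2\times 3$ counterexample: let $u_1(A,\cdot)=(1,0,0)$, $u_1(B,\cdot)=(0,1,1)$, $u_2(A,\cdot)=(1,2,0)$, $u_2(B,\cdot)=(2,0,1)$ over columns $(s_1,s_2,s_3)$. Here $s_1$ strictly dominates $s_3$ for player 2, yet $(A,s_3)$ is a source and the remaining five profiles form a single strongly connected attracting set, hence the unique sink connected component; it contains $(B,s_3)$ but not $(A,s_3)$ and so is not a subgame. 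Thus the first sentence of the lemma needs the hypothesis ``without dominated strategies'' (or must be restated for the reduced game), exactly as you say.
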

\begin{proof}
We can assume there are no dominated strategies, because the sink connected components are always contained within the subgame surviving iterated dominance. We call the strategy sets $\{A,B\}$ and $\{s_1,\dots,s_n\}$ respectively. Let $s_1,s_2,\dots,s_n$ be the order in which player 2 prefers their strategies when player 1 plays $A$. Because no strategy dominates any other, when player 1 plays $B$ player 2 must prefer their own strategies in the reverse order, which is $s_n,s_{n-1},\dots,s_1$. Thus there are arcs $\arc{(A,s_i)}{(A,s_{i+1})}$ and $\arc{(B,s_i)}{(B,s_{i-1})}$ in the response graph for each $s_i$.

Any non-singleton sink connected component must contain profiles with both of player 1's strategies $A$ and $B$. Without loss of generality suppose that $(A,s_i)$ and $(B,s_j)$ are in a sink connected component. By the structure of the graph, there are nodes $(B,s_k)$ and $(A,s_\ell)$ with $k\geq n$, $\ell \leq i$ and paths from $(A,s_i)$ and $(B,s_j)$ respectively. All nodes reachable from $(B,s_k)$ and $(A,s_\ell)$ are also in the sink connected component, and so in particular $(A,s_j)$ and $(B,s_i)$ are both in the component. This is true for any pair of profiles, so the component is a subgame.

Now we consider the second claim. There are two cases: (1) either one of $(B,s_1)$ or $(A,s_n)$ is a sink, or (2) there are arcs $\arc{(B,s_1)}{(A,s_1)}$ and $\arc{(A,s_n)}{(B,s_n)}$ in the graph. In case (2) there is a Hamiltonian cycle
\[ (A,s_1),(A,s_2),\dots,(A,s_n),(B,s_n),(B,s_{n-1}),\dots,(B,s_1),(A,s_1)
\] containing all nodes so the graph is strongly connected. We show case (1) is impossible in a preference-zero-sum game. Assume without loss of generality that $(A,s_n)$ is a sink, so there is an arc $\arc{(B,s_n)}{(A,s_n)}$. As $A$ does not dominate $B$, there is an $s_i$ with $\arc{(A,s_i)}{(B,s_i)}$. But then the $2\times 2$ subgame $\{A,B\}\times \{s_i,s_n\}$ has the response graph of $2\times 2$ Coordination, which is impossible in a preference-zero-sum game, by~\cite[Corollary~4.7]{biggar2022twoplayer}.
\end{proof}

\begin{lem} \label{lem: diamond games}
No zero-sum game with an interior equilibrium has a response graph isomorphic to the Inner or Outer Diamond graphs.
\end{lem}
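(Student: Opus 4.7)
The plan is to treat the Inner Diamond and Outer Diamond graphs separately, in each case assuming for contradiction that a zero-sum game $u$ with the corresponding response graph admits an interior equilibrium $(x^*, y^*)$ of value $v$. Since the equilibrium is interior, $x^*$ and $y^*$ assign positive weight to every pure strategy; combined with the minimax theorem and the indifference principle, this means every pure strategy of each player earns exactly the game value $v$ against the opposing equilibrium mix.

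For the Inner Diamond, the unique sink connected component (Lemma~\ref{lem: unique}) is a single profile $p = (s_1^*, s_2^*)$, and the absence of outgoing arcs at $p$ makes it a strict pure Nash equilibrium of $u$. The minimax theorem thus gives $u_1(p) = v$, and strictness of the sink, applied to player $2$'s arcs at $p$, yields $u_1(s_1^*, s_2') > v$ for every $s_2' \neq s_2^*$. Interior-ness of $y^*$ then forces
\[
\sum_{s_2' \in S_2} y^*_{s_2'}\, u_1(s_1^*, s_2') \;>\; v,
\]
contradicting the fact that the pure strategy $s_1^*$ must also earn value $v$ against $y^*$ at the interior equilibrium.

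For the Outer Diamond the argument requires more work, since no pure Nash equilibrium is available to pin $v$ down. My plan is to translate each arc of the Outer Diamond into a strict ordering on the corresponding row or column of the matrix $M := U - vJ$, where $U$ is player $1$'s payoff matrix and $J$ is the all-ones matrix, and then to exploit that $My^* = 0$ and $x^{*\top} M = 0$ with $x^*, y^* > 0$ force every row and every column of $M$ to contain both a strictly positive and a strictly negative entry. Combined with $u_2 = -u_1$, which interlocks the column orderings (encoding player $1$'s arcs) with the row orderings (encoding player $2$'s arcs), the specific arc pattern of the Outer Diamond produces a system of strict inequalities on the entries of $M$ that no pair of strictly positive null vectors can simultaneously satisfy.

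The main obstacle is precisely this last step: extracting from the Outer Diamond's arc structure a purely combinatorial certificate of infeasibility. I expect to handle it by an exhaustive check over the finitely many sign patterns of $M$ that are compatible with the strict row and column orderings imposed by the graph, showing that every such pattern obstructs the existence of strictly positive left and right null vectors. A slicker alternative would be to locate an explicit mixed deviation—some convex combination of rows of $U$ that strictly dominates another row in the $v$-adjusted matrix—directly from the arc pattern; identifying such a deviation from the graph structure alone is the crux of the proof.
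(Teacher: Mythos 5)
Your Inner Diamond argument is correct, and it is genuinely different from the paper's proof: the paper argues dynamically, invoking the result of \cite{papadimitriou2016nash} that a two-player zero-sum game with an interior equilibrium has every profile chain recurrent under the replicator, and then noting that the pure sink of the Inner Diamond is an attractor (Lemma~\ref{attracting subgames}), which by Lemma~\ref{chains in attractors} cannot be chain equivalent to everything. Your route is purely static---uniqueness of the value plus the indifference principle at a fully mixed equilibrium---and is more elementary and self-contained, since it needs no dynamical machinery at all.

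The Outer Diamond case, however, is a genuine gap: you reduce it to showing that the arc pattern of the graph is incompatible with the existence of strictly positive left and right null vectors of $U - vJ$, and you explicitly defer the ``combinatorial certificate of infeasibility'' to an exhaustive sign-pattern check that you do not carry out. As written, half the lemma is unproved. The irony is that the dual of your own Inner Diamond argument closes it immediately, with no enumeration. The Outer Diamond graph has a pure \emph{source} $q = (t_1, t_2)$: every arc points out of $q$, so $u_1(s_1', t_2) > u_1(q)$ for all $s_1' \neq t_1$ (player 1's arcs) and $u_1(t_1, s_2') < u_1(q)$ for all $s_2' \neq t_2$ (player 2's arcs, using $u_2 = -u_1$). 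At an interior equilibrium $(x^*, y^*)$ of value $v$, full support of $x^*$ makes $t_1$ a best response to $y^*$, so $v = \sum_{s_2'} y^*_{s_2'} u_1(t_1, s_2') < u_1(q)$ because $y^*$ puts positive weight on some $s_2' \neq t_2$; symmetrically, full support of $y^*$ makes $t_2$ a best response for player 2, so $v = \sum_{s_1'} x^*_{s_1'} u_1(s_1', t_2) > u_1(q)$. These two strict inequalities contradict each other. (Note this version does not even need the value equality $u_1(q) = v$ that you invoked via the minimax theorem for the sink; indifference alone suffices, and the same streamlining applies to your Inner Diamond case by running both players' inequalities at the sink.) This is the dynamical duality in the paper's proof---sink versus source, attractor versus repellor---recast in static terms.
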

\begin{proof}
Suppose for contradiction that $p$ is an interior Nash equilibrium of a game $u = (u_1,u_2)$ with the response graph isomorphic to the Inner Diamond. Then by~\cite{papadimitriou2016nash}, all profiles must be chain recurrent. However by Lemma~\ref{attracting subgames} the  sink of the Inner Diamond graph is an attractor, so cannot be chain equivalent to all other points by Lemma~\ref{chains in attractors}. For the Outer Diamond graph, observe similarly that the source is a repellor, which again cannot be chain equivalent to all other points, by Lemma~\ref{chains in attractors}.
\end{proof}

\end{document}